\newtheorem{theorem}{Theorem}[section]
\newtheorem{lemma}[theorem]{Lemma}
\newtheorem{corollary}[theorem]{Corollary}
\newcommand{\sq}{\hbox{\rlap{$\sqcap$}$\sqcup$}}
\newcommand{\qed}{\hspace*{\fill}\sq}
\newenvironment{proof}{\noindent {\bf Proof.}\ }{\qed\par\vskip 4mm\par}
\newcommand{\floor}[1]{\lfloor #1 \rfloor}
\newcommand{\ceil}[1]{\lceil #1 \rceil}
\newcommand{\opt}{\mathrm{OPT}}
\newcommand{\alg}{\mathrm{ALG}}
\newcommand{\MyFrame}[1]{\noindent \framebox[\textwidth]{ \begin{minipage}{0.97\textwidth} #1 \end{minipage}}}%
\begin{document}
\begin{titlepage}
\title{TSP with Time Windows and Service Time%
\protect\footnote{Supported in part by the Israel Science Foundation
(grant No. 1404/10), by the Google Inter-university center and by
The Israeli Centers of Research Excellence (I-CORE) program, (Center
No.4/11).}}

\author{ Yossi Azar \thanks{School of Computer Science, Tel-Aviv University, Tel-Aviv 69978, Israel. E-mail: {\tt  azar@tau.ac.il} } \and Adi Vardi
\thanks{School of Computer Science, Tel-Aviv University, Tel-Aviv 69978, Israel.
 E-mail: {\tt  adi.vardi@gmail.com}}
 }

\maketitle

\vspace{-0.1cm}
\begin{abstract}
We consider TSP with time windows and service time. In this problem
we receive a sequence of requests for a service at nodes in a metric
space and a time window for each request. The goal of the online
algorithm is to maximize the number of requests served during their
time window. The time to traverse an edge is the distance between
the incident nodes of that edge. Serving a request requires unit
time. We characterize the competitive ratio for each metric space
separately. The competitive ratio depends on the relation between
the minimum laxity (the minimum length of a time window) and the
diameter of the metric space. Specifically, there is a constant
competitive algorithm depending whether the laxity is larger or
smaller than the diameter. In addition, we characterize the rate of
convergence of the competitive ratio to $1$ as the laxity increases.
Specifically, we provide a matching lower and upper bounds depending
on the ratio between the laxity and the TSP of the metric space (the
minimum distance to traverse all nodes). An application of our
result improves the lower bound for colored packets with transition
cost and matches the upper bound. In proving our lower bounds we use
an interesting non-standard embedding with some special properties.
This embedding may be interesting by its own.
\end{abstract}

\thispagestyle{empty}
\end{titlepage}

\section{Introduction}
Consider an employee in Google IT division. He his responsible for
replacing malfunctioning disks in Google huge computer farms. During
his shift he receives requests to replace disks at some points in
time. Each request is associate with a deadline. If the disk will
not be replaced before the deadline expiration, there is a high
probability to a significant hit in the performance of the Search
Engine. Replacing a disk takes a constant time (service time).
However, before the employee can replace it, he must travel from his
current location to the location of the disk. The goal is to
maximize the number of disks replaced before their deadline expired.
What path should the employee take and how the path changes with new
requests? We call this problem TSP with time windows and service
time (or vehicle routing with time windows and production costs).
All the requests are of unit service time, and the window of request
$i$ is $[r_i,d_i]$. In this paper we characterize the competitive
ratio for each metric space separately. We determine whether the
competitive ratio is constant or not depending on the minimum laxity
(the minimum length of a time window) and the diameter of the metric
space (the maximum distance between nodes in the metric space). In
addition, we consider the case where the laxity is large.
Specifically, we provide a matching lower and upper bounds depending
on the ratio between the laxity and the TSP of the metric space (the
minimum distance to traverse all nodes).

Note that if the service time is negligible compared to the minimum
positive distance between nodes then the problem reduces to the TSP
(or vehicle routing) with time windows
\cite{bansal2004approximation}. Moreover, if in addition all the
deadlines are the same and all the release times are zero then the
problem reduces to the well known (offline) orienteering problem
\cite{arkin1998resource,awerbuch1998new,golden1987orienteering}.

We note that even when the service time is not negligible, the TSP
with time windows and service time can be reduced to TSP with time
windows \cite{bansal2004approximation} by changing the metric space.
However, our competitive ratio depends on the properties of the
metric space. The reduction might change the parameters of the
metric space significantly. Hence, it might influence the crucial
parameter which determine the competitive ratio. Therefore, we
maintain the service time in our model.

Vehicle routing problems (with time windows) has been extensively
studied both in computer science and operation research literature,
see
\cite{desrochers1988vehicle,desrochers1992new,savelsbergh1985local,thangiah1993vehicle,tan2001heuristic,nagamochi2008approximating}.
For an arbitrary metric space Bansal et al.
\cite{bansal2004approximation} showed $O(\log^2 n)$-approximation
(for certain cases a better approximation can be achieved
\cite{chekuri2007approximation}). Constant factor approximations
have been presented for the case of points on a line
\cite{bar2005approximating,tsitsiklis1992special,karuno20032}. For
the orienteering problem, i.e., all the release times are zero and
all the deadlines are the same, there are constant factor
approximation algorithms
\cite{bansal2004approximation,chekuri2012improved,chekuri2004maximum,blum2003approximation}.

To the best of our knowledge we are the first to consider the online
version of this general problem.

Another motivation for our problem is the colored Packets with
Deadlines and Metric Space Transition Cost. In this problem we are
given a sequence of incoming colored packets. Each colored packet is
of unit size and has a deadline. There is a reconfiguration cost
(setup cost) to switch between colors (the cost depends on the
colors). The goal is to find a schedule that maximizes the number of
packets that were transmitted before the deadline. Note that for one
color the earliest deadline first (EDF) strategy is known to achieve
an optimal throughput. The unit cost color has been considered in
\cite{Azar_BMC}. In particular, an application of our result to the
uniform metric space, improves their lower bound and matches their
upper bound.

\subsection{Our results} \label{subsec:OurResults}
Let $L = \min_{i \in \sigma} \{d_i - r_i\} \geq 1$ denote the
minimum laxity of the requests (the minimum length of a time
window). Let $\Delta(G)$ be the diameter of the metric space $G$,
i.e., the largest distance between two nodes. Let TSP(G) denote the
weight of the minimal Traveling Salesperson Problem (TSP) in the
metric space $G$. Let MST(G) denote the weight of the minimal
spanning tree (MST) in the metric space G.

In this paper we characterize when it is possible to achieve a
$\Theta(1)$ competitive algorithm and when the best competitive
algorithm is unbounded. Moreover, we characterize the rate of
convergence of the competitive ratio to $1$ as the laxity increases.
Specifically, we provide a matching lower and upper bounds depending
on the ratio between the laxity and the TSP of the metric space. We
consider three cases.

\begin{itemize}

\item Case A: $L < \Delta(G) / 2$. For any metric space
the competitive ratio of any online algorithm is unbounded. The
claim is easily proved.

\item Case B: $(2+\epsilon)\Delta(G) < L \leq TSP(G)$ for any $\epsilon > 0$.
We design $O(1)$ competitive algorithm and a $1.002$ lower bound.

\item Case C: $L > TSP(G)$.
Let $\delta = TSP(G) / L < 1$. We show a strictly larger than $1$
lower bound. Specifically, if $\delta \leq \frac{1}{9}$ we provide a
lower bound of $1 + \Omega\big(\sqrt{\delta}\big)$ as well as a
matching upper bound of $1 + O\big(\sqrt{\delta}\big)$. We note that
without service time it is easy to design $1$-competitive algorithm
by traveling over TSP periodically. Recall that there is a reduction
from the service time model to a model without service time that
seems to contradict the lower bound (see
\cite{bansal2004approximation}). Nevertheless, the reduction
modifies the metric space and hence increases $\delta$ such that
$\delta$ is not smaller than $\frac{1}{9}$.

\end{itemize}

Note that in the remaining cases, i.e., $\Delta(G) / 2 \leq L \leq
(2+\epsilon)\Delta(G)$ the question whether there exists a constant
competitive algorithm depends on the metric space. Specifically, for
$L = \Delta(G)$ it is easy to proof that there is no constant
competitive algorithm for the uniform metric space. In contrast,
there is a constant competitive algorithm for the line metric space.

Observe that when the metric space consists of a single node (i.e.,
no traveling time) the optimal algorithm to serve a requests is EDF
(earliest deadline first) which is $1$-competitive. This case is
equivalent to packet scheduling with deadlines. If packets have
colors and switching between colors costs $1$, then our result
improves the lower bound of \cite{Azar_BMC}. Specifically, we
improve their $1 + \Omega\big(\delta\big)$ lower bound to $1 +
\Omega\big(\sqrt{\delta}\big)$ (where $C$ is the number of nodes in
$G$ and hence, $\delta = \frac{C}{L} < 1$) and match their upper
bound for the uniform metric space.

It is also interesting to mention that in many cases the competitive
ratio of an algorithm is computed as a supremum over all the metric
spaces and the lower bound is proved for one specific metric space.
We prove more refined results. Specifically, we show an upper bound
and a lower bound for each metric space separately. Hence, one can
not design a better competitive algorithm for the specific metric
space that one encounter in the real specific instance.

{\bf Embedding result.} One of the technique that is used for the
lower bound is an embedding that is interesting on its own. Let
$w({\rm S})$ denote the weight of the star metric $S$ (i.e., the sum
of the weights of the edges of S). We prove that for any given
metric space ${\rm G}$ on nodes $V$ and for any vertex $v_0 \in V$
there exists a star metric ${\rm S}$ with leaves $V$ and an
embedding $f: {\rm G} \rightarrow {\rm S}$ from ${\rm G}$  to ${\rm
S}$ $(f$ depends on $v_0)$ such that:
\begin{enumerate}
\item $w({\rm S}) = {\rm MST(G)}$.

\item The weight of every Steiner tree in $S$ that contains $v_0$ is not
larger than the weight of the Steiner tree on the same nodes in $G$.
\end{enumerate}
Note that this embedding is different from the usual embedding since
we do not refer specifically to distances between vertices.
Typically, embedding is used to prove an upper bound by simplifying
the metric space. In contrast, our embedding is used to prove a
lower bound.

In order to prove the lower bound we first establish it for a star
metric,  and then extend it for a general metric space. Note that a
lower bound on a sub-graph is not a lower bound on the ambient
graph. For example, a lower bound for MST of a metric space $G$ is
not a lower bound for $G$ since the algorithm may use the additional
edges to reduce the transition time.

\section{The Model} \label{sec:TheModel}

We formally model the TSP with time windows and service time problem
as follows. Let $G = (V,w)$ be a given metric space where $V$ is a
set of $n$ nodes and $w$ is a distance function. We are given an
online sequence of requests for service. Each request is
characterized by a pair $([r_{i}, d_{i}], v_{i})$, where $r_{i} \in
N_{+}$ and $d_{i} \in N_{+}$ are the respective arrival time and
deadline time of the request, and $v_{i} \in V$ is a node in the
metric space $G$. The time to traverse from node $v_i$ to node $v_j$
is $w(v_i, v_j)$. Serving a request at some node requires unit size
service time. The goal is to serve as many requests as possible
within their time windows $[r_i,d_i]$.

When all $r_i$ are equal to $0$ and all $d_i$ are equal to $B$ and
the service time is negligible the problem is reduced to the
well-known orienteering problem with budget $B$ and a prize for each
node which is equal to the number of requests in this node. That is,
finding a path of total distance at most $B$ that maximizes the
prize of all visited nodes.


Let $\alg(\sigma)$ ($\opt(\sigma)$) denote the throughput of the
online (respectively, optimal) algorithm with respect to a sequence
$\sigma$. We consider the benefit problem and hence $\inf_{\sigma}
\!{\rm OPT}(\sigma)/{\rm ALG}(\sigma) \geq 1$.

\section{Embedding of Metric Spaces} \label{sec:Embeddings}
In this section we describe an embedding of a general metric space
into a star metric. We begin by introducing some new definitions:

\begin{itemize}
\item We define $w({\rm T}) = \sum\limits_{e \in V} w(e)$ for a tree ${\rm T} = (V,E)$, and
let $P_{{\rm T}}(v)$ denote the parent of node $v$ in a rooted tree
${\rm T}$.

\item Let ${\rm S}$ be a star metric with a center $c$. Let $w_i$ denote the weight of the edge incident to
the vertex $v_i$. We define $w_{\rm S}(V) = \sum\limits_{v \in V} w(c, v)$ ($= \sum\limits_{v_i \in V} w_i$).
It is clear that for a star ${\rm S}$ with leaves $V$, $w_{\rm S}(V) = w({\rm S})$.

\item Let ${\rm T}_{\rm G}(V)$ be the minimum weight connected component that contains the set $V$
(i.e., the minimum Steiner tree on these points) in the metric space
$G$.

\end{itemize}
Note that $E({\rm G})$ is the set of edges of graph $G$. Recall that
MST(G) denote the weight of the minimal spanning tree (MST) in the
metric space G.
\begin{theorem} \label{thm:GraphToStar}
For any given metric space ${\rm G}$ on nodes $V$ and for any vertex
$v_0 \in V$ there exists a star metric ${\rm S}$ with leaves $V$ and
an embedding $f: {\rm G} \rightarrow {\rm S}$ from  ${\rm G}$  to
${\rm S}$ $(f$ depends on $v_0)$ such that:
\begin{enumerate}
\item $w({\rm S}) = w({\rm T}_{\rm G}(V))$  {\rm (= MST(G))}.

\item For every $V' \subseteq V$ such that $v_0 \in V'$,
$w({\rm T}_{\rm G}(V')) \geq w_{\rm S}(V').$
\end{enumerate}
\end{theorem}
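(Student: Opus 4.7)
My plan is to construct the star $S$ directly from the MST of $G$. Let ${\rm T}$ be a minimum spanning tree of $G$ rooted at $v_0$. For each $v \in V \setminus \{v_0\}$ set $w_v = w(v, P_{{\rm T}}(v))$, set $w_{v_0} = 0$, and take ${\rm S}$ to be the star with some new center $c$ and leaves $V$ in which the edge incident to leaf $v$ has weight $w_v$; the embedding $f$ sends each $v \in V$ to the corresponding leaf. Property~1 follows immediately, since $w({\rm S}) = \sum_{v \in V} w_v = \sum_{v \neq v_0} w(v, P_{{\rm T}}(v)) = w({\rm T}) = w({\rm T}_{\rm G}(V))$.

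For property~2 the key tool is the cut property of the MST: for every $v \neq v_0$, if $C_v$ denotes the subtree of ${\rm T}$ hanging off $v$ when ${\rm T}$ is rooted at $v_0$, then $(v, P_{{\rm T}}(v))$ is the minimum-weight edge of $G$ crossing the cut $(C_v, V \setminus C_v)$, so every $G$-edge across this cut has weight at least $w_v$.

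Fix $V' \subseteq V$ with $v_0 \in V'$ and let ${\rm T}' = {\rm T}_{\rm G}(V')$ be a minimum Steiner tree on $V'$. For each $v \in V' \setminus \{v_0\}$ let $B_v \subseteq E({\rm T}')$ be the set of edges of ${\rm T}'$ that cross $(C_v, V \setminus C_v)$; since ${\rm T}'$ connects $v \in C_v$ to $v_0 \notin C_v$, the set $B_v$ is nonempty, and every edge in $B_v$ has weight at least $w_v$. I would apply Hall's marriage theorem to the bipartite graph between $V' \setminus \{v_0\}$ and $E({\rm T}')$ in which $v$ is adjacent to exactly $B_v$, extracting an injection $\phi : V' \setminus \{v_0\} \to E({\rm T}')$ with $\phi(v) \in B_v$; the desired inequality then drops out as $w({\rm T}') \geq \sum_v w(\phi(v)) \geq \sum_v w_v = w_{{\rm S}}(V')$.

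The main obstacle is verifying Hall's condition: for every $U \subseteq V' \setminus \{v_0\}$, $|\bigcup_{v \in U} B_v| \geq |U|$. I would prove this by counting connected components. Let $F$ be the forest obtained from ${\rm T}'$ by deleting $\bigcup_{v \in U} B_v$; since ${\rm T}'$ is a tree on $V({\rm T}')$, $F$ has exactly $1 + |\bigcup_{v \in U} B_v|$ components. I claim that the $|U|+1$ vertices in $U \cup \{v_0\}$ all lie in pairwise distinct components of $F$: for every $v \in U$, the deleted edges $B_v$ separate $v \in C_v$ from $v_0 \notin C_v$; and for distinct $v_1, v_2 \in U$, at least one of $v_1 \notin C_{v_2}$ or $v_2 \notin C_{v_1}$ must hold (otherwise each would be a proper descendant of the other in ${\rm T}$), so the corresponding $B_{v_i}$ already separates $v_1$ from $v_2$ in $F$. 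Therefore $F$ has at least $|U| + 1$ components, forcing $|\bigcup_{v \in U} B_v| \geq |U|$ and closing the argument.
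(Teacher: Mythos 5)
Your proof is correct, and it takes a genuinely different route to Property~2 than the paper does. The star construction itself is identical (edge weights $w_v = w(v,P_{\rm T}(v))$ from an MST rooted at $v_0$), but where the paper proves Property~2 dynamically --- running a modified Prim's algorithm (``Embed-Prim'') that transforms ${\rm T}_{\rm G}(V')$ into the MST and arguing that each added edge charged to $w_{\rm S}(V')$ displaces a Steiner-tree edge of at least the same weight --- you argue statically: the MST cut property gives that every edge of ${\rm T}_{\rm G}(V')$ crossing the cut $(C_v, V\setminus C_v)$ weighs at least $w_v$, and Hall's theorem (verified by the component-counting argument, which correctly uses that two distinct vertices cannot each lie in the other's subtree) yields an injective assignment of a distinct Steiner-tree edge of weight at least $w_v$ to each $v\in V'\setminus\{v_0\}$. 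Your approach buys a cleaner, self-contained combinatorial argument that avoids the bookkeeping of the exchange process (in particular the paper's Lemma~\ref{lemma:EmbedPrimExistEdge} guaranteeing a removable edge exists at each step, and the tracking of which edges were ``deleted and re-added''); the paper's approach buys an explicit algorithmic transformation that makes the edge-by-edge correspondence between the Steiner tree and the MST concrete. Both establish exactly the inequality $w({\rm T}_{\rm G}(V'))\geq \sum_{v\in V'\setminus\{v_0\}} w_v = w_{\rm S}(V')$, and your handling of the degenerate details (nonemptiness of $B_v$, the role of $w_{v_0}=0$) is sound.
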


\begin{proof}
We prove the theorem by describing a star metric with the required
properties. Let $G$ be a given metric space on nodes $V$ and a leaf
$v_0 \in V$. Let $T$ be the MST for $G$ created by means of Prims
algorithm with the root $v_0$. Let  $S$  be a star metric with
leaves $V$ such that for each $u \in V$, $w_u = w(u, P_{{\rm
T}}(u))$. Clearly, $w_{v_0} = 0$. We prove that $S$ and $v_0$
satisfy the theorem's properties: {\\\bf Property 1:} Clearly,
$w({\rm S}) = w({\rm T})$, and since $T$ is a MST for $G$, $w({\rm
S}) = w({\rm T}) = {\rm MST(G)} = w({\rm T}_{\rm G}(V))$. {\\\bf
Property 2:} We have to prove that for every $V' \subseteq V$ such
that $v_0 \in V'$, $w({\rm T}_{\rm G}(V')) \geq w_S(V').$ Let $V' =
\{v_0, v_{i_1}, ..., v_{i_r-1}\}$. Recall that we defined $w_u =
w(u, P_{\rm T}(u))$. Clearly,  $w_{\rm S}(V') = \sum_{j=1}^{r-1}
w(v_{i_j}, P_{{\rm T}}(v_{i_j}))$. Hence it  suffices to prove that
$w({\rm T}_{\rm G}(V')) \geq \sum_{j=1}^{r-1} w(v_{i_j}, P_{\rm
T}(v_{i_j}))$. The proof is based on the following idea. We begin
with the minimum Steiner tree that contains $V'$ (meaning ${\rm
T}_{\rm G}(V')$). Then we transform it to an MST on all vertices by
running Prim from $v_0$ and replacing the Steiner tree's edges with
Prim's edges. We prove that each time the algorithm adds an edge $e$
that corresponds to an edge in $w_S(V')$ it deletes an edge $e'$
from ${\rm T}_{\rm G}(V')$ such that $w(e) \leq w(e')$. Note that we
also add edges incident to vertices not in $V'$ in order to maintain
a tree. The weights of these edges are ignored. Since the algorithm
starts with ${\rm T}_{\rm G}(V')$ and finishes with T, this proves
that the property holds (recall that the weight of the edges of  S
is determined by the weight of the edges of  T). The exact
description of our algorithm, called the Embed-Prim algorithm, is
provided in Figure \ref{alg:EmbedPrim}.

\begin{figure}[htbp]
\MyFrame{

\begin{enumerate}
\item ${\rm T}' \leftarrow {\rm T}_{\rm G}(V')$ \label{alg:Init}

\item $i \leftarrow 1$ \label{alg:AfterInit}

\item $V_{\rm new} = {v_0} ($let $u_0 = v_0)$

\item Repeat until $V_{\rm new} = V$

\begin{enumerate}

\item Choose an edge $e_i = (w,u_i)$ with minimal weight such that $w$ is in
$V_{\rm new}$ and $u_i$ is not (ties are broken by id).

\item Add $u_i$ to $V_{\rm new}$

\item If $u_i \notin V'$ then add $e_i$ and $u_i$ to ${\rm T}'$. \label{alg:Case1}
\begin{enumerate}
\item Else, if $e_i \in E({\rm T}')$ do nothing (for the proof
view $e_i$ as an edge that was deleted and added). \label{alg:Case2}

\item Else, if $e_i \notin E({\rm T}')$ then \label{alg:Case3}
\begin{enumerate}
\item Add $e_i$ to $T'$.

\item Let $C'$ be the cycle created by adding $e_i$ to ${\rm T}'$. Let $e'$ be the edge with
the maximal weight on $C'$ such that $e' \notin \{e_1, ..., e_{i}\}$
and $e' \cap \{u_0, ..., u_{i-1}\} \neq \emptyset$ (in Lemma
\ref{lemma:EmbedPrimExistEdge} we prove that such an edge always
exists).
\\ Remove $e'$ from ${\rm T}'$.
\end{enumerate}
\end{enumerate}

\item $i \leftarrow i + 1$

\end{enumerate}
\end{enumerate}

} \caption{Algorithm Embed-Prim.} \label{alg:EmbedPrim}
\end{figure}

First we show the correctness of Embed-Prim:
\begin{lemma} \label{lemma:EmbedPrimExistEdge}
Let $C'$ be the cycle created in step \ref{alg:Case3} in the
algorithm. There exists at least one edge $e'$ that belongs to $C'$,
such that $e' \notin \{e_1, ..., e_{i}\}$ and $e' \cap \{u_0, ...,
u_{i-1}\} \neq \emptyset$.
\end{lemma}
\begin{proof}
Note that a cycle is created in step \ref{alg:Case3} in the
algorithm since adding an edge to a tree always creates a cycle.
Similar to Prim, the edges that Embed-Prim adds after step
\ref{alg:Init} in the algorithm do not create a cycle. Therefore
$C'$ must contains edges added in step \ref{alg:Init} in the
algorithm. At least one of these edges must touch one of the
vertices $\{u_0, ..., u_{i-1}\}$
\end{proof}

\begin{lemma} \label{lemma:EmbedPrimTree}
After each step of Embed-Prim, ${\rm T}'$ is a tree which contains
$V'$.
\end{lemma}

\begin{proof}
At the beginning ${\rm T}'$ is ${\rm T}_{\rm G}(V')$, which is a
tree that contains $V'$. We never remove vertices and hence ${\rm
T}'$ always contains $V'$. Whenever we add an edge that creates a
cycle we open the cycle by removing an edge from it.
\end{proof}

Now we claim that Embed-Prim satisfies the following invariant:
\begin{lemma} \label{lemma:EmbedPrimInvariant}
Each time Embed-Prim adds an edge $e$ that corresponds to an edge in
$w_{\rm S}(V')$, it deletes an edge $e'$ from ${\rm T}_{\rm G}(V')$
such that $w(e) \leq w(e')$.
\end{lemma}
\begin{proof}
Step \ref{alg:Case1} in the algorithm is irrelevant, since the edge
does not correspond to an edge in $w_{\rm S}(V')$ (the vertex that
was added by Embed-Prim is not in $V'$). In step \ref{alg:Case2} in
the algorithm, $w(e) = w(e')$. In step \ref{alg:Case3} in the
algorithm, since Embed-Prim could have added edge $e$', but did
choose the edge $e$ instead, $w(e) \leq w(e')$ (recall that
Embed-Prim always chooses the edge with the minimal weight).
\end{proof}

\par Now we are ready to prove that $S$ satisfies the second property of the embedding.
By the definition of Prim $e_i = (u_i, P_{{\rm T}}(u_i))$. Hence,
$\sum_{j=1}^{r-1} w(e_{i_j}) \!=\! \sum_{j=1}^{r-1} w(v_{i_j},
P_{{\rm T}}(v_{i_j}))$. Let $e'_i$ be the edge deleted from $T'$
when edge $e_i$ was added (steps \ref{alg:Case2}, \ref{alg:Case3} in
the algorithm). Then
$$
w_{\rm S}(V') = \sum_{j=1}^{r-1} w(v_{i_j}, P_{{\rm T}}(v_{i_j}))
=\sum_{j=1}^{r-1} w(e_{i_j}) \leq \sum_{j=1}^{r-1} w(e'_{i_j}) \leq
w({\rm T}_{\rm G}(V')).
$$
where the first equality follows from the definition, the first
inequality results from the invariance, and the last inequality
follows from the definition.
\end{proof}

\section{Lower Bounds} \label{sec:Hardness}

\subsection{Lower Bound for a Large Diameter Laxity Ratio (Case A)}
In this section we consider the case where $L < \Delta(G) / 2$
(recall that $\Delta(G)$ is the diameter and $L$ is the laxity) then
we show that the competitive ratio of any algorithm is unbounded.

\begin{theorem} \label{thm:LargeDiamRatio}
No online algorithm can achieve a bounded competitive ratio for any
metric space in which $L < \Delta(G) / 2$.
\end{theorem}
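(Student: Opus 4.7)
The plan is to exhibit, for any metric space $G$ with $L < \Delta(G)/2$, an adaptive adversarial request sequence against which every deterministic online algorithm $\alg$ has arbitrarily large ratio. First I would fix two nodes $u,v \in V$ realizing the diameter, so $w(u,v) = \Delta(G) > 2L$. For a parameter $T$ that I let tend to infinity, the adversary proceeds in integer time steps: at each $t = 0,1,\ldots,T-1$ it inspects $\alg$'s current position $p_t$, and the triangle inequality
\[
w(p_t,u) + w(p_t,v) \geq w(u,v) = \Delta(G) > 2L
\]
guarantees the existence of some $x_t \in \{u,v\}$ with $w(p_t,x_t) > L$. The adversary then releases a single request at $x_t$ with time window $[t,\,t+L]$.

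Next I would verify the key claim that $\alg$ serves none of these requests. Traveling from $p_t$ to $x_t$ requires time $w(p_t,x_t) > L$, so $\alg$ arrives at $x_t$ strictly after the deadline $t+L$ and therefore cannot serve the request released at time $t$. In particular, $\alg$ has throughput $0$ on the entire constructed sequence. To make the ratio $\opt/\alg$ well defined I would prepend one harmless ``starter'' request at $\alg$'s initial node with a very large window, so both algorithms serve it and $\alg \geq 1$.

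Then I would lower-bound $\opt$. Each integer time contributes exactly one request, to either $u$ or $v$, so by pigeonhole one of them, say $u$, receives at least $\lceil T/2 \rceil$ of them. Their release times are distinct integers and each has laxity $\geq L \geq 1$, so a feasible offline strategy is to walk to $u$ once and then serve greedily in release order; after an initial transient of length at most $\Delta(G)$ this yields $\opt \geq T/2 - O(\Delta(G))$. Letting $T \to \infty$ then pushes $\opt/\alg$ to infinity, so no constant competitive ratio is attainable.

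The only subtle step is justifying that no look-ahead or clever motion can rescue $\alg$. The point is that the adversary fixes $x_t$ based on $p_t$ at the \emph{instant} of release, and since $w(u,v) > 2L$ no single location in the metric space lies within distance $L$ of both $u$ and $v$ simultaneously. Hence, regardless of the trajectory $\alg$ commits to, the adversary can always place the next request at the currently far node and keep it permanently out of reach inside its laxity window.
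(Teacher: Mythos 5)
Your proof is correct and follows essentially the same approach as the paper: an adaptive adversary uses a diameter pair and the triangle inequality to always place the next request at a node more than $L$ away from the algorithm's current position, so the algorithm serves nothing while the offline optimum serves almost everything. The only (immaterial) difference is that the paper spaces requests $\Delta(G)+1$ apart so that OPT trivially serves all of them, whereas you release one per time step and handle OPT via a pigeonhole argument at one of the two diameter endpoints.
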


\begin{proof}
Let $G$ be any metric space. Every $\Delta(G) + 1$ units of time we
bring a request with laxity of $L$ on a node which is at a distance
of at least $\Delta(G) / 2$ from the current location of the online
algorithm. It is clear that the algorithm can not serve any requests
while ${\rm OPT}$ can serve all the requests.
\end{proof}

\subsection{Lower Bound for a Small Diameter Laxity Ratio (Case B and C)} \label{sec:SmallDiamRatio}
In this section we consider Cases B and C. Let $\delta = TSP(G) /
L$. If $\delta < 1$ (Case C) We show a strictly larger than $1$
lower bound. Specifically, if $\delta \leq \frac{1}{9}$ we provide a
lower bound of $1 + \Omega\big(\sqrt{\delta}\big)$. If $\delta >
\frac{1}{9}$ (Case B) we can use requests with laxity of $9TSP({\rm
G})$ (i.e., $\delta = \frac{1}{9}$), and obtain a lower bound of
$1.002$. Therefore, from now on we only consider Case C.

\subsubsection{Lower Bound for a Star Metric}
\label{sec:LowerBoundStar}

In this section we consider the case where the traveling time
between nodes is represented by a star metric. This is also
equivalent to the case where the traveling time from node $i$ is
$w_{i}$.

The general idea is that the adversary creates many requests with
large deadline at node $v_0$ at each time unit, and also blocks of
fewer requests with close deadlines at other nodes. Any online
algorithm must choose between serving many requests with large
deadline or traveling between many of the nodes and serving also the
requests with close deadline.

Recall that $w({\rm S})$ denote the weight of the star metric $S$
(i.e., the sum of the weights of the edges of $S$). We define $F =
\sqrt{w({\rm S})L}$. Let $\delta = {\rm TSP(G)} / L = w({\rm S}) /
L$.
\begin{theorem} \label{thm:LowerBoundStar}
No deterministic or randomized online algorithm can achieve a
competitive ratio better than $1 + \Omega\big(\sqrt{\delta}\big)$ in
any given star metric $\rm S$ for $\delta \leq \frac{1}{9}$.
Otherwise, if $\delta > \frac{1}{9}$, the bound becomes $1.002$.

\end{theorem}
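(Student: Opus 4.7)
The plan is to apply Yao's minimax principle: it suffices to exhibit a distribution over input sequences on the star $S$ such that every deterministic online algorithm has expected throughput at most $\mathrm{OPT}/(1+\Omega(\sqrt{\delta}))$. The randomized lower bound then follows automatically, and the case $\delta>1/9$ reduces to $\delta=1/9$ by adopting a larger effective laxity $L'=9w(S)\geq L$ (still admissible since the minimum-laxity constraint only requires all laxities to be $\geq L$), which yields the stated constant $1.002$.

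The hard instance mixes a steady background at the center with random leaf bursts, with burst size $F=\sqrt{w(S)L}$ chosen to balance the algorithm's two losing options. Over a long horizon $T$, one request arrives at $v_0$ at every integer time with deadline offset $L$, so any algorithm parked at $v_0$ can serve at the maximum rate of one per unit time. Superimposed on this, at $m=\Theta(T/L)$ carefully spaced burst-times the adversary draws a random leaf $v_i$ from a distribution calibrated to the edge weights (supported on leaves where $w_i$ is on the order of $F$) and releases a block of $F$ requests at $v_i$, each with deadline offset $L$; every request thus has laxity exactly $L$. OPT is realized by a planned tour that visits each burst-leaf in time to collect its block and otherwise serves the center stream; since $L\geq 9w(S)$ the schedule fits comfortably, total travel is at most $2w(S)$, and OPT serves at least $T+mF-2w(S)$ requests, i.e., essentially $T(1+\sqrt{\delta})$ for $T\gg L$.

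For a fixed deterministic online algorithm I charge its loss burst-by-burst. When a burst appears at $v_i$ the algorithm is at some location; collecting the burst costs at least $2w_i$ travel, and therefore that many missed center requests, while declining it forfeits all $F$ of its requests. Hence the per-burst shortfall is at least $\min(F,2w_i)$, and $F=\sqrt{w(S)L}$ is precisely the value for which, under the weighted leaf distribution, this expected shortfall is $\Omega(F)$. Summing over the $m$ bursts gives total expected loss $\Omega(mF)=\Omega(T\sqrt{\delta})$ against $\mathrm{OPT}=\Theta(T)$, yielding the claimed ratio $1+\Omega(\sqrt{\delta})$. The main obstacle is this amortization: the algorithm can pre-position between bursts, chain consecutive visits, or skip bursts strategically, so a rigorous proof requires a potential/amortized argument against the random leaf choices showing that no such online strategy can beat the balance point set by $F$ in expectation.
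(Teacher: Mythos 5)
There is a genuine gap here, and it is not only the amortization step you flag at the end --- the construction itself, as described, does not yield a lower bound. In your instance everything is revealed at release time with the full laxity $L$ remaining: when a burst of $F$ requests appears at a random leaf $v_i$, the online algorithm faces exactly the same ``travel $2w_i$ versus forfeit $F$'' trade-off as OPT does, and in a star OPT gains essentially nothing from knowing the leaf in advance (reaching $v_i$ costs $w_i$ from the center regardless of when the trip starts). So the per-burst shortfall $\min(F,2w_i)$ that you charge to the algorithm is incurred by OPT as well, and the claimed gap of $\Omega(mF)$ cancels. Relatedly, the claim that OPT's total travel is at most $2w(S)$ cannot hold for $m=\Theta(T/L)$ bursts at distinct random leaves, each with laxity only $L$, over a horizon $T\gg L$: OPT must reach each burst leaf within $L$ of its release, which forces travel on the order of $\sum_j 2w_{i_j}$, not a single tour. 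Finally, because your center stream has laxity exactly $L$, OPT cannot defer it and loses one center request for every unit of travel, exactly like the online algorithm; the two sides degrade in lockstep.

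The paper's construction avoids all of this by being adaptive and by giving the two request types very different deadlines. Type A (center) requests get deadline $3L$, so OPT can ignore them during the blocks and serve all $L$ of them in the interval $[2L+1,3L]$ at no cost, while leaf requests get laxity $L$ and arrive in blocks of $F$ confined to the first $L$ time units. The adversary then watches the algorithm and terminates with a calibrated flood: if with probability at least $1/4$ the algorithm has accumulated $F/2$ unserved leaf requests, it releases $L$ more laxity-$L$ requests at $v_1$, so the backlog exceeds the service capacity and $F/2$ requests must be dropped; if with probability at least $1/4$ the algorithm has idled for $F$ units in a block, it releases $3L$ deadline-$3L$ requests at $v_0$, saturating the remaining horizon so the idle time becomes a permanent loss; and if neither event ever occurs, the algorithm must be travelling $\Omega(w(S))$ per pair of good blocks, losing $\Omega(\sqrt{w(S)L})$ overall. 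This lose--lose--lose trichotomy, triggered by floods tailored to the algorithm's observed behavior, is the missing mechanism; an oblivious distribution of isolated bursts creates no asymmetry between the online algorithm and OPT, so Yao's principle has nothing to amplify. (Your reduction of the case $\delta>1/9$ to $\delta=1/9$ by enlarging the laxity does match the paper.)
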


\begin{proof}
Let $S$ be a given star metric with nodes $V = {v_0,...,v_{n-1}}$.
We can assume, without loss of generality, that $\delta \leq
\frac{1}{9}$, since otherwise one may use requests with laxity of
$9w({\rm S})$ (i.e., $\delta = \frac{1}{9}$), and obtain a lower
bound of $1.002$. Let type A node denote node $v_0 \in V$ and type B
node denote nodes $v_1,...,v_{n-1} \in V$. Let type A request and
type B request refer to requests with type A node and type B node,
respectively. Recall that $w_i$ denote the weight of the edge incident to
the vertex $v_i$.
\\ We begin by describing the sequence $\sigma({\rm S}, \alg)$.
{\\\bf Sequence structure:} Recall that each request is
characterized by a pair $([r_{i}, d_{i}], v_{i})$, where $r_{i} \in
N_{+}$ and $d_{i} \in N_{+}$ are the respective arrival time and
deadline time of the request, and $v_{i}$ is a node in $S$. There
are up to $N=\frac{L}{3F}=\frac{1}{3}\sqrt{\frac{L}{w({\rm S})}}$
blocks, where each block consists of $3F$ time units. Let $t_i = 1 +
3(i-1)F$ denote the beginning time of block $i$. For each block $i$,
where $1 \leq i \leq N$, $F$ requests located at various nodes
arrive at the beginning of the block. Specifically,
$\frac{w_j}{w({\rm S})-w_0}F$ type B requests $([t_i,L+t_i],v_j)$,
for each $1 \leq j \leq n-1$ are released. A type A request
$([t,3L],v_0)$ is released at each time unit $t$ in each block. Once
the adversary stops the blocks, additional requests arrive (we call
this the final event). The exact sequence is defined as follows:
\begin{enumerate}
\item $i \leftarrow 1$
\item Add block $i$ \label{subsec:AddBlockStar}
\item If with probability at least 1/4
there are at least $F/2$ unserved type B requests at the end of block $i$
(denoted by Condition 1),
then $L$ requests $([t_{i+1},L+t_{i+1}],v_1)$ are released and the sequence is terminated.
See Figure \ref{figure:Case1}.
Clearly, $t_{i+1}$ is the time of the final event. Denote this by Termination Case 1.

\item Else, if with probability at least 1/4,
at most $2F$ requests were served during block $i$ (denoted by
Condition 2), then $3L$ requests $([t_{i+1},3L],v_0)$ are released
and the sequence is terminated. Clearly, $t_{i+1}$ is the time of
the final event. See Figure \ref{figure:Case2}. Denote this by
Termination Case 2.

\item Else, if $i = N$ (there are $N$ blocks, none of them satisfied Condition 1 or 2)
$3L$ requests $([L+1,3L],v_0)$ are released, and the sequence is
terminated. Clearly, $L+1$ is the time of the final event. See
Figure \ref{figure:Case3}. Denote this by Termination Case 3.

\item Else ($i < N$) then $i \leftarrow i + 1$, Goto \ref{subsec:AddBlockStar}

\end{enumerate}
We make the following {\bf observations:} (i) Each block consists of
$3F$ time units. Hence, if $\alg$ served at most $2F$ requests
during a block, there must have been at least $F$ idle time units.
(ii) There are up to $\frac{1}{3}\sqrt{\frac{L}{w({\rm S})}}$ blocks
and each block consists of 3$\sqrt{w({\rm S})L}$ time units. Hence,
the time of the final event is at most $L+1$. (iii) Exactly one type
A request arrives at each time-slot until the final event. Hence, at
most $L$ type A requests arrive before (not including) the final
event. (iv) During each block, exactly $F$ type B requests arrive,
which sum up to at most $L/3$ type B requests before (not including)
the final event.

Now we can analyze the competitive ratio of $\sigma({\rm S}, \alg)$.
Consider the following possible sequences (according to the termination type):
\begin{enumerate}
\item Termination Case 1: Let $Y$ denote the number of requests in the sequence. According to the observations, the sequence
consists of at most $L$ type A requests, and at most $\frac{4}{3}L$ type B requests ($L/3$ until the
final event and $L$ at the final event). Hence, $Y \leq L + \frac{4}{3}L \leq 3L$.

\begin{itemize}
\item {\bf We bound the performance of ALG:}
At time $t_{i+1}$ there is a probability of at least 1/4 that $\alg$
has $L + F/2$ unserved type B requests. Since type B requests have
laxity of $L$, $\alg$ can serves at most $L+1$ of them, and must
drop at least $F/2 - 1$. The expected number of served requests is
$$
E(\alg(\sigma)) \leq Y - \frac{1}{4}(F/2 - 1) = Y - \frac{1}{8}F + 1/4.
$$
\item {\bf We bound the performance of OPT$^\prime$:}
$\opt ^\prime$ serves the requests in three stages:
\begin{itemize}
\item {\bf Type B requests that arrive before the final event:}
Recall that all type B requests in a block arrive at once in the
beginning of the block. In each block $\opt ^\prime$ serves first
all the requests to node $v_{1}$, then all the requests to node
$v_{2}$, and so on. It is clear that $\opt ^\prime$ needs at most
$F+2w({\rm S})$ time units to serves the requests ($F$ for serving
and $2w({\rm S})$ for traveling). $\opt ^\prime$ serves the requests
starting from the beginning of the block. Recall that $L \geq
9w({\rm S})$ and $F = \sqrt{w({\rm S})L}$. Therefore $2F \geq
18w({\rm S})$. Since the block's size is $3F$, there are enough time
units. Moreover, since $L \geq 9w({\rm S})$, $L \geq 3\sqrt{w({\rm
S})L} = 3F > F+2w({\rm S})$. Hence, all the requests can be served
before deadline expiration.
\item {\bf Type B requests that arrive during the final event:}
The $L$ requests $([t_{i+1},$ $L+t_{i+1}],v_1)$ arrived during the
final release time are served by $\opt ^\prime$ consecutively from
time $t_{i+1}$. $\opt ^\prime$ can serve $L$ requests, except for
one travel phase, and hence may lose at most $2w({\rm S})$ requests.
According to the observations, the time of the final event $t_{i+1}$
is at most $L+1$. We conclude that $\opt ^\prime$ serves all type B
requests until time unit $2L$.

\item {\bf Type A requests:}
$\opt ^\prime$ serves the $L$ type A requests consecutively from
time unit $2L+1$. Since the deadlines are $3L$, $\opt ^\prime$
serves all type A requests.
\end{itemize}
We conclude that $\opt(\sigma) \geq \opt ^\prime (\sigma) \geq Y - 2w({\rm S})$.
\end{itemize}
The competitive ratio is
\begin{eqnarray*}
\frac{\opt(\sigma)}{E(\alg(\sigma))} & \geq & \frac{Y - 2w({\rm
S})}{Y - \frac{1}{8}F + 1/4} \geq \frac{3L - 2w({\rm S})}{3L -
\frac{1}{8}\left(\sqrt{w({\rm S})L}\right) + 1/4} =  1 +
\Omega\left(\sqrt{\frac{w({\rm S})}{L}}\right) \ .
\end{eqnarray*}

Here the second inequality results from the fact that the number is
above 1 and the numerator and the denominator increase by the same
value.

\item Termination Case 2: The sequence consists of more than $3L$ type A requests, all deadlines are
at most $3L$.

\begin{itemize}
\item {\bf We bound the performance of ALG:}
The probability that $\alg$ was idle during $F$ time units is at
least 1/4. Hence, the expected number of served requests is
$E(\alg(\sigma)) \leq 3L - \frac{1}{4}F. $

\item {\bf We bound the performance of OPT$^\prime$:}
At each time unit until the final event, $\opt ^\prime$ serves the
type A request that arrived at that particular time unit.
Consequently, from the final event and until time unit $3L$, $\opt
^\prime$ serves the type A requests that arrived at the final event.
Therefore, $\opt ^\prime$ serves $3L$ type A requests, and so
$\opt(\sigma) \geq \opt ^\prime (\sigma) \geq 3L$.
\end{itemize}

The competitive ratio is
$$
\frac{\opt(\sigma)}{E(\alg(\sigma))} \geq \frac{3L}{3L -
\frac{1}{4}F} = \frac{3L}{3L - \frac{1}{4}\left(\sqrt{w({\rm
S})L}\right)} = 1 + \Omega\left(\sqrt{\frac{w({\rm S})}{L}}\right).
$$

\item Termination Case 3: the sequence consists of $3L$ type A requests, all deadlines are
at most $3L$.

\begin{itemize}
\item {\bf We bound the performance of ALG:}
Let $U_{i}$ be the event that the number of unserved type B requests
at the end of block $i$ is less than $F/2$. If $U_{i}$ occurs, then
let $j_{k}$, $1 \leq k \leq r$, be the type B nodes visited by
$\alg$ in block $i$. At least $F/2$ requests that arrived in this
block have to be served (recall that $F$ type B requests arrive at
the beginning of each block). Therefore,
$$
\frac{w_{j_{1}}}{w({\rm S})-w_0}F + \frac{w_{j_{2}}}{w({\rm S})-w_0}F + \cdots + \frac{w_{j_{r}}}{w({\rm S})-w_{0}}F \geq F/2,
$$
and so
$$
w_{j_{1}} + w_{j_{2}} + \cdots + w_{j_{r}} \geq \frac{w({\rm S})-w_{0}}{2}.
$$
Let $E_{i}$ be the event that more than $2F$ requests are served
during block $i$. If event $U_{i-1}$ and $E_{i}$ occur, then there
are at most $3F/2$ unserved type B requests in the beginning of
block $i$ ($F$ arrived in the beginning of the block and at most
$F/2$ from the previous block) but more than $2F$ requests were
served. Therefore, at least one type A request was served during the
block. Combining the results, if $U_{i}$, $U_{i-1}$ and $E_{i}$
occur then:
\begin{itemize}
\item During block $i$ at least $(w({\rm S})-w_{0})/2$ time units were used for traveling
between type B nodes.

\item Type A request was served during the block.
\end{itemize}
A block $i$ is called {\it good\/} if the events $U_{i}$, $U_{i-1}$
and $E_{i}$ occur. For any two (consecutive) good blocks the
traveling cost is at least $(w(S)-w_{0})/2  + w_{0} \geq w(S)/2$.
Since none of the blocks satisfy Condition 1 or 2, it follows that
for all $i$ such that  $\frac{1}{3}\sqrt{\frac{L}{w({\rm S})}} \geq
i \geq 1$ we have: ${\rm Pr}[U_{i}] \geq 3/4, {\rm Pr}[U_{i-1}] \geq
3/4,$ and ${\rm Pr}[E_{i}] \geq 3/4$. Therefore:
\begin{align*}
& {\rm Pr}[U_{i} \cap U_{i-1} \cap E_{i}]   =   1 - {\rm Pr}[\neg (U_{i} \cap U_{i-1} \cap E_{i})]  \\
& \qquad =   1 - {\rm Pr}[\neg U_{i} \cup \neg U_{i-1} \cup \neg E_{i}] \geq 1 - 1/4 - 1/4 - 1/4 = 1/4.
\end{align*}
The sequence consists of $\frac{1}{3}\sqrt{\frac{L}{w({\rm S})}}$
blocks. Therefore, the expected number of good blocks is
$\frac{1}{4} \cdot \frac{1}{3}\sqrt{\frac{L}{w({\rm S})}} =
\frac{1}{12}\sqrt{\frac{L}{w({\rm S})}}$ and hence the expected
number of disjoint pairs of blocks is
$\frac{1}{24}\sqrt{\frac{L}{w({\rm S})}}$. Consequently, the
expected number of lost requests is at least
$\frac{1}{24}\sqrt{\frac{L}{w({\rm S})}} \frac{w({\rm S})}{2}$. We
conclude that the expected number of served requests is
$$
E(\alg(\sigma)) \leq 3L - \frac{1}{48}w({\rm S})\sqrt{\frac{L}{w(S)}}
= 3L - \frac{1}{48}\left(\sqrt{w({\rm S})L}\right).
$$

\item {\bf We bound the performance of OPT$^\prime$:}
At each time unit until the final event, $\opt ^\prime$ serves the
type A request that arrived at the same time unit. Consequently,
from the final event and until time unit $3L$, $\opt ^\prime$ serves
the type A requests that arrived at the final event. Therefore,
$\opt ^\prime$ serves 3L type A requests, and so
 $\opt \geq \opt ^\prime \geq 3L$.

\end{itemize}
The competitive ratio is
$$
\frac{\opt(\sigma)}{E(\alg(\sigma))} \geq \frac{3L}{3L -
\frac{1}{48}\left(\sqrt{w({\rm S})L}\right)} = 1 +
\Omega\left(\sqrt{\frac{w({\rm S})}{L}}\right).
$$

Note that in all 3 cases we got $1 + \Omega\left(\sqrt{\frac{w({\rm
S})}{L}}\right) = 1 + \Omega\big(\sqrt{\delta}\big)$. This completes
the proof.
\end{enumerate}
\end{proof}

\begin{corollary} \label{thm:LowerBoundUniform}
No deterministic or randomized online algorithm can achieve a
competitive ratio better than $1 + \Omega\big(\sqrt{n / L}\big)$
when all traveling times takes one unit of time and $L \geq 9n$.
Otherwise, if $L < 9n$, the bound becomes $1.002$ (note that in this
case $\delta = n / L$).
\end{corollary}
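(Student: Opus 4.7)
The plan is to rerun the proof of Theorem~\ref{thm:LowerBoundStar} on the uniform metric, with $w(\text{S})$ replaced by $n$ throughout. Fix an arbitrary node $v_0$ and call the other $n-1$ nodes the type~B nodes. Use the same adversary as in Theorem~\ref{thm:LowerBoundStar}: set $F=\sqrt{nL}$ and use up to $N=\frac{1}{3}\sqrt{L/n}$ blocks of length $3F$; in each block release $F/(n-1)$ type~B requests (laxity $L$) at each of $v_1,\dots,v_{n-1}$ plus one type~A request (deadline $3L$) at $v_0$ per time slot; the three termination conditions and the final-event request batches are adopted verbatim.

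For Termination Cases~1 and~2 essentially no change is needed: the hypothesis $L\ge 9n$ replaces $L\ge 9w(\text{S})$ and ensures that OPT can serve all $F$ type~B requests of a block in time, since visiting all $n-1$ type~B nodes in a block costs $n-1+F\le 3F$. The upper bounds on $E[\text{ALG}(\sigma)]$ are identical, so each case yields $1+\Omega(\sqrt{n/L})$. The only nontrivial step is in Termination Case~3, where one must lower-bound the traveling cost of a good block. In the star proof, serving $\ge F/2$ type~B requests forces visits to nodes whose edge weights sum to at least $(w(\text{S})-w_0)/2$. In the uniform metric, serving $\ge F/2$ of the uniformly distributed type~B requests forces the algorithm to visit at least $(n-1)/2$ \emph{distinct} type~B nodes (each carries $F/(n-1)$ servable requests), and each new visit costs one unit of travel. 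Thus the travel cost per good block is $\Omega(n)$, matching the star's $\Omega(w(\text{S}))$, and the remaining arithmetic --- expected good blocks $\Theta(\sqrt{L/n})$, expected lost type~A requests $\Omega(\sqrt{nL})$, hence $\text{OPT}/E[\text{ALG}]\ge 1+\Omega(\sqrt{n/L})$ --- is identical to Theorem~\ref{thm:LowerBoundStar}. The $L<9n$ case is handled exactly as before by pretending the laxity equals $9n$, yielding the $1.002$ bound.

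The main (and quite mild) obstacle is to check that the shorter inter-node distances in the uniform metric --- $1$ instead of $w_i+w_j=2$ in the star --- do not give the online algorithm enough of an advantage to break the argument. They do not: the critical quantity is the number of \emph{distinct} type~B nodes the algorithm must visit in a good block, and each such visit incurs unit travel cost regardless of where the algorithm arrives from, so the per-good-block travel cost remains $\Omega(n)$ and the rest of the chain of inequalities survives untouched.
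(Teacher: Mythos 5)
Your proof is correct, but it takes a substantially longer route than the paper. You rerun the entire adversary argument of Theorem~\ref{thm:LowerBoundStar} on the uniform metric, re-deriving each termination case; the paper instead observes that the uniform metric on $n$ points \emph{is} a star metric: take $S$ with all edge weights equal to $1/2$, so that every leaf-to-leaf distance is exactly $1$ and $w({\rm S})=n/2$, and then Theorem~\ref{thm:LowerBoundStar} applies verbatim as a black box, giving $1+\Omega\big(\sqrt{(n/2)/L}\big)=1+\Omega\big(\sqrt{n/L}\big)$ (and $L\geq 9n$ guarantees $\delta=w({\rm S})/L\leq 1/18\leq 1/9$). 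The "main obstacle" you identify --- that inter-node distances are $1$ in the uniform metric versus $w_i+w_j=2$ in the star --- is an artifact of implicitly assuming unit edge weights; scaling the star's edges to $1/2$ makes the two metrics literally identical and the obstacle, along with the need to redo the proof, disappears. That said, your re-derivation is sound where it matters: in Termination Case~3 you correctly replace the paper's edge-weight accounting by the observation that serving $F/2$ of the uniformly spread type~B requests forces visits to at least $(n-1)/2$ distinct nodes at unit cost each, which recovers the $\Omega(n)=\Omega(w({\rm S}))$ travel cost per good block, and the rest of the arithmetic goes through unchanged. What your approach buys is a self-contained argument that does not depend on recognizing the uniform metric as a star; what it costs is several pages of repeated work that the paper's one-line reduction avoids.
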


\begin{proof}
Let $S$ be a star metric with $n$ nodes such that the weight of each
edge is equal to 1/2. Clearly, traveling between each two nodes
requires one time unit and $w({\rm S}) = n/2$. Applying Theorem
\ref{thm:LowerBoundStar}, we obtain the lower bound of $1 +
\Omega\big(\sqrt{n / L}\big)$.
\end{proof}

Observe that if $n = C$ the bound becomes $1 + \Omega\big(\sqrt{C /
L}\big)$ which improves the lower bound of $1 + \Omega\big(C /
L\big)$ from \cite{Azar_BMC} (since $C < L$).


\subsubsection{Lower Bound for a General Metric Space} \label{sec:LowerBoundGraph}

In this section we consider the case where the traveling time
between nodes is represented by a metric space $G$. Note that a
lower bound for a star metric space does not imply a lower bound for
a general metric space. Recall that $\delta = {\rm TSP(G) }/ L < 1$.

We use the embedding from Theorem \ref{thm:GraphToStar} to prove a
$1 + \Omega\big(\sqrt{\delta}\big)$ lower bound.
\begin{theorem} \label{thm:LowerBoundMetricSpace}
No deterministic or randomized online algorithm can achieve a
competitive ratio better than $1 + \Omega\big(\sqrt{\delta}\big)$ in
any given metric space ${\rm G}$, for $\delta \leq \frac{1}{9}$.
Otherwise, if $\delta > \frac{1}{9}$, the bound becomes $1.002$.
\end{theorem}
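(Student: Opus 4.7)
The plan is to reduce the general-metric lower bound to the star-metric lower bound of Theorem~\ref{thm:LowerBoundStar} using the embedding of Theorem~\ref{thm:GraphToStar}. Let $v_0$ be the initial position of the online algorithm; apply the embedding with this $v_0$ to obtain a star $S$ on leaves $V$ with weights $\{w_u\}_{u\in V}$ satisfying $w_{v_0}=0$, $w(S)=\mathrm{MST}(G)$, and $w(T_G(V'))\geq w_S(V')$ for every $V'\ni v_0$. Since $\mathrm{MST}(G)\leq\mathrm{TSP}(G)\leq 2\,\mathrm{MST}(G)$, we have $w(S)/L=\Theta(\delta)$, so a lower bound of $1+\Omega(\sqrt{w(S)/L})$ suffices. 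I would then release the exact sequence $\sigma(S,\alg)$ constructed in the proof of Theorem~\ref{thm:LowerBoundStar}, but with each request placed at the corresponding node in $G$ (using the embedding weights $w_j$ to determine the number of type~B requests at $v_j$).

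To transfer the OPT analysis I would show that OPT in $G$ can mimic $\opt'$ from the star proof: in each block, the type B requests at the various nodes are all served by a single TSP tour from $v_0$ of cost at most $\mathrm{TSP}(G)\le 2w(S)$, plus $F$ service time, which fits into the $3F$-length block because $w(S)\le L/9$. The handling of the final-event burst and of the $3L$ type A requests is unchanged, so $\opt(\sigma)$ satisfies the same lower bounds ($Y-2w(S)$ in Termination Case~1 and $3L$ in Cases~2 and~3). For the ALG analysis the critical observation is the following walk-length bound: any walk in $G$ whose vertex set contains $V'\ni v_0$ has total length at least $w(T_G(V'))\ge w_S(V')$, because the distinct edges of such a walk form a connected subgraph spanning $V'$, and hence its weight dominates the minimum Steiner tree on $V'$. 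In any good block (events $U_{i-1}, E_i, U_i$ all occur), the algorithm must serve a type~A request and therefore visit $v_0$, while simultaneously visiting a set $B_i$ of type B nodes with $\sum_{v_j\in B_i}w_j\ge (w(S)-w_0)/2=w(S)/2$. Applying the walk-length bound to $V_i'=\{v_0\}\cup B_i$ recovers exactly the $w(S)/2$ per-block traveling-time lower bound that the star proof needs.

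With both transfers in place the probabilistic argument of Theorem~\ref{thm:LowerBoundStar} carries over unchanged: the expected number of good blocks is $\Omega(\sqrt{L/w(S)})$, the expected number of lost requests is $\Omega(\sqrt{w(S)L})$, and the competitive ratio is $1+\Omega(\sqrt{w(S)/L})=1+\Omega(\sqrt{\delta})$. The case $\delta>\tfrac{1}{9}$ follows by the same padding trick as in the star proof (use requests with laxity $9\,\mathrm{TSP}(G)$, obtaining $1.002$). The main obstacle I anticipate is the ALG side: a~priori the algorithm could exploit shortcuts in $G$ that are unavailable in $S$ to reach more type B nodes per block, and its walk within a block need not start or end at $v_0$. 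This is precisely what property~2 of the embedding prevents, provided that $v_0$ lies in the visited set---and this hypothesis is guaranteed in the only blocks that matter (good blocks) because event $E_i$ together with $U_{i-1}$ forces ALG to serve a type A request, and hence to visit $v_0$.
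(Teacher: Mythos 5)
Your proposal follows essentially the same route as the paper's proof: embed $G$ into the star $S$ via Theorem~\ref{thm:GraphToStar}, replay the sequence $\sigma(S,\alg)$ of Theorem~\ref{thm:LowerBoundStar} on the corresponding nodes of $G$, transfer the $\opt$ bounds via a TSP tour of cost at most $2w(S)$, and use property~2 of the embedding to show that ALG's travel in $G$ dominates its travel in $S$. Your write-up is in fact more explicit than the paper's on the one delicate point (the walk-length/Steiner-tree bound and why $v_0$ lies in the visited set of every good block, which is exactly the hypothesis property~2 needs), but the argument is the same one.
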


\section{Upper bounds} \label{sec:Upper}

\subsection{Constant Approximation Algorithm for case B} \label{sec:AlgorithmB}
In this section we design a deterministic online algorithm, for a
general metric space where $L > 9\Delta(G)$ (recall that $\Delta(G)$
is the diameter of $G$). The algorithm achieves a constant
competitive ratio. As shown in the previous section, no online
algorithm can achieves a competitive ratio better $1.002$. A more
precise analysis can replace $L > 9\Delta(G)$ with $L > (2 +
\epsilon)\Delta(G)$ for any $\epsilon > 0$.

The algorithm which we call ORIENT-WINDOW combines the following
ideas.
\begin{itemize}
\item The algorithm works in phase of $K = 3\Delta(G)$.
In each phase the algorithm serves only requests that arrived in the
previous phases, and will not expired during the phase. Due to this
perturbation we lose a constant factor.

\item The decision which requests will be served in a phase ignore
their deadlines. Due to this violation of EDF we lose a constant
factor.

\item In each phase the algorithm serves requests node by node.
The order of the nodes is determined by solving an orienteering
problem. Since a constant approximation algorithm is known to the
orienteering problem, we lose a constant factor.
\end{itemize}

\vspace{-10pt}
\begin{figure}[htbp]
\MyFrame{
\smallskip In each phase $\ell = 1, 2, \ldots$,\  do
\begin{itemize}
\item Beginning of the phase (at time $K(\ell-1)$)
\begin{itemize}
\item Decrease the deadline of each unserved request $(r,d,v)$
from $d$ to $K \floor{d/K}$.

\item Let $R^\ell$ be the collection of unserved requests
such that their decreased deadline was not exceeded. Let $S_j^{\ell}
\subseteq R^{\ell}$ denote the subset of requests at node $v_j$ in
$R^{\ell}$.
\item Using a constant approximation algorithm solve the
unrooted orienteering problem with budget $\Delta(G)$ where the
prize of a node $v_j$ is the number of requests in $S_j^{\ell}$. Let
$v_{i_{1}}$, $v_{i_{2}}$, ..., $v_{i_{r}}$ denote the order of the
nodes in the solution.

\item $\rho_\ell$ consists of all requests of $S_{i_{1}}^{\ell}$
scheduled consecutively, then all requests of $S_{i_{2}}^{\ell}$
scheduled consecutively, and so on.

\end{itemize}

\item During the phase (between time $K(\ell-1)$ and time $K\ell$)

\begin{itemize}

\item The requests are served according to $\rho_\ell$ (unserved requests in the suffix of
$\rho_\ell$, due to the end of the phase are dropped).

\end{itemize}

\end{itemize}

} \caption{Algorithm ORIENT-WINDOW} \label{alg:TSPEDFB}
\vspace{-10pt}
\end{figure}

\begin{theorem} \label{thm:ApproximationB}
The algorithm ORIENT-WINDOW attains a competitive ratio of $O(1)$.
\end{theorem}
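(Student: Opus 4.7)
The plan is to relate $\alg$'s throughput to $\opt$'s throughput phase by phase, absorbing the three sources of constant-factor loss in ORIENT-WINDOW (quantizing time into phases of length $K$, rounding deadlines down to multiples of $K$, and using a constant-factor orienteering approximation) into the competitive ratio. Let $A_\ell$ denote the number of requests $\alg$ serves during phase $\ell$ and $P^*_\ell$ the optimum orienteering prize on $R^\ell$ with budget $\Delta(G)$.

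The first step is a per-phase lower bound $A_\ell \geq \min\bigl(P^*_\ell/c_1,\,\Delta(G)\bigr)$. The orienteering subroutine returns a path of travel length at most $\Delta(G)$ and prize at least $P^*_\ell/c_1$; within a phase of length $K = 3\Delta(G)$, $\alg$ can travel at most $\Delta(G)$ from its current location to the path's endpoint, traverse the path (at most $\Delta(G)$), and still reserve at least $\Delta(G)$ time units for service. The second step relates $P^*_\ell$ to $\opt$. Define $\widehat{O}_\ell \subseteq R^\ell$ to be the set of requests $\opt$ serves in phase $\ell$ with $r \leq K(\ell-1)$ that are unserved by $\alg$ at the start of phase $\ell$. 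The schedule of $\opt$ restricted to $\widehat{O}_\ell$ fits within phase $\ell$, so its travel walk has total length at most $3\Delta(G)$; greedily splitting this walk at node boundaries into a constant number of contiguous sub-paths each of travel length at most $\Delta(G)$, the heaviest sub-path visits nodes whose prize in $R^\ell$ sums to at least $|\widehat{O}_\ell|/O(1)$, so $P^*_\ell = \Omega(|\widehat{O}_\ell|)$. Combining with the first step and using $|\widehat{O}_\ell| \leq K$ to absorb the $\min$, I obtain $A_\ell = \Omega(|\widehat{O}_\ell|)$ uniformly in $\ell$.

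The third step is a charging argument over $\opt$'s requests. Partition them by $\opt$'s serving phase; each request $q$ served by $\opt$ in phase $\ell$ falls into one of three disjoint classes: (i) $q \in \widehat{O}_\ell$; (ii) $\alg$ itself serves $q$ in some phase, contributing directly to $|\alg|$; or (iii) $r_q > K(\ell-1)$, so the laxity bound $L > 3K$ gives $d_q > K(\ell+2)$, placing $q \in R^{\ell+1}$, and applying the sub-path argument to $\opt$'s phase-$\ell$ walk restricted to class (iii) requests yields an additional $\Omega(\text{class (iii) count in phase } \ell)$ contribution to $P^*_{\ell+1}$. Summing across phases with each $\opt$-request charged a bounded number of times gives $|\opt| \leq O(1)\bigl(\sum_\ell |\widehat{O}_\ell| + |\alg|\bigr) = O(1)\,|\alg|$, which is the claimed constant competitive ratio.

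The main obstacle is the third-step bookkeeping: since each request's time window spans at least three phases, an $\opt$-request may lie in $R^\ell$ for several consecutive $\ell$, and $\opt$ may serve it at a phase different from where $\alg$ would naturally consider it. The laxity bound $L > 9\Delta(G) = 3K$ is exactly what makes the charging work, because it guarantees that every $\opt$-request can be cleanly classified as either well-aligned in its own phase, already $\alg$-served, or shiftable by one phase to a well-aligned position that still fits the orienteering budget in the next phase.
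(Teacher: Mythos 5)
Your overall strategy --- a direct per-phase charging of $\opt$'s throughput against the orienteering prize $P^*_\ell$, with the pigeonhole splitting of $\opt$'s intra-phase walk into $O(1)$ sub-paths of travel length at most $\Delta(G)$ --- is sound, and its core engine coincides with the paper's key lemma (Lemma~\ref{lemma:ORIENTApp}, which argues that if the restricted optimum served more than $9x$ requests in a phase, some $\Delta(G)$-length window of its schedule would be a feasible budget-$\Delta(G)$ orienteering solution of prize more than $3x$, contradicting the approximation guarantee). Where you diverge is the bookkeeping around phase misalignment: the paper avoids your three-way classification entirely by (a) passing to the $K$-perturbed instance $\sigma'$ and invoking Theorem~2.2 of \cite{Azar_BMC} to lose only a factor of $3$ in $\opt$, and (b) comparing $\alg$ against a restricted optimum $\opt'$ that is forbidden from serving anything $\alg$ serves, so that $\opt(\sigma)\leq\opt'(\sigma)+\alg(\sigma)$ and the per-phase comparison needs no case analysis on release or deadline times. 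Your route is more self-contained (no external perturbation theorem), but it pays for that in the case analysis, and that is where the gap sits.

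The gap: your three classes are not exhaustive. Consider a request $q$ with $r_q\leq K(\ell-1)$ that $\alg$ never serves and that $\opt$ serves during phase $\ell$, but whose deadline satisfies $d_q<K\ell$. Its decreased deadline is $K\floor{d_q/K}=K(\ell-1)$, so $q\notin R^\ell$ and hence $q\notin\widehat{O}_\ell$; it is not in class (ii) because $\alg$ never serves it; and it is not in class (iii) because $r_q\leq K(\ell-1)$. It is therefore charged nowhere, and the claimed identity $|\opt|\leq O(1)\bigl(\sum_\ell|\widehat{O}_\ell|+|\alg|\bigr)$ does not follow. The repair is symmetric to your class-(iii) argument: since $d_q-r_q\geq L>3K$, such a request has $r_q<K(\ell-3)$ and decreased deadline exactly $K(\ell-1)$, hence it lies in $R^{\ell-1}$; restricting $\opt$'s phase-$\ell$ walk to these requests and splitting it into $O(1)$ sub-paths of length at most $\Delta(G)$ charges them to $P^*_{\ell-1}$. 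Each $P^*_\ell$ then carries three separate lower bounds (from $\widehat{O}_\ell$, from the previous phase's late-released requests, and from the next phase's early-deadline requests), which combine via $\max\geq\mathrm{sum}/3$, so the final constant survives. With that repair, and with the observation that classes (ii) and (iii) as written overlap and must be made disjoint by excluding $\alg$-served requests from (iii), your proof goes through.
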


\subsection{Asymptotically Optimal Algorithm for Case C} \label{sec:Algorithm}
In this section we design a deterministic online algorithm, for a
general metric space. The algorithm achieves a competitive ratio of
$1 + o(1)$ when the minimum laxity of the requests is asymptotically
larger than the weight of the TSP (as shown in the previous
sections, this is essential).

The algorithm is a natural extension of the BG algorithm from
\cite{Azar_BMC}. Our algorithm, which we call TSP-EDF, formally
described in Figure \ref{alg:TSPEDF}, works in phases of $K =
\sqrt{{\rm TSP(G)}L}$ time units. In each phase the algorithm serves
requests node by node. The order of the nodes is determined by the
minimum TSP or an approximation. The algorithm achieves a
competitive ratio of $1 + O\big(\sqrt{{\rm TSP(G)}/ L}\big)$ for $L
> {\rm 10TSP(G)}$.

\vspace{-10pt}
\begin{figure}[htbp]
\MyFrame{
\smallskip In each phase $\ell = 1, 2, \ldots$,\  do
\begin{itemize}
\item Beginning of the phase (at time $K(\ell-1)$)
\begin{itemize}
\item Decrease the deadline of each unserved request $(r,d,v)$
from $d$ to $K \floor{d/K}$.

\item Let $R^\ell$ be the collection of unserved requests
such that their decreased deadline was not exceeded. Let $S^{\ell}$
be the $K$-length prefix of EDF (earliest deadline first) schedule
(according to the modified deadline) of $R^\ell$. Let $S_j^{\ell}
\subseteq S^{\ell}$ denote the subset of requests at node $v_j$ in
$S^{\ell}$.
\\ Let $v_{i_{1}}$, $v_{i_{2}}$, ..., $v_{i_{n}}$ denote the order of the
nodes in the minimal TSP (or approximation).

\item $\rho_\ell$ consists of all requests of $S_{i_{1}}^{\ell}$
served consecutively, then all requests of $S_{i_{2}}^{\ell}$ served
consecutively, and so on.

\end{itemize}

\item During the phase (between time $K(\ell-1)$ and time $K\ell$)

\begin{itemize}

\item The requests are served according to $\rho_\ell$ (unserved requests in the suffix of
$\rho_\ell$, due to the end of the phase are dropped).

\end{itemize}

\end{itemize}

} \caption{Algorithm TSP-EDF.} \label{alg:TSPEDF} \vspace{-10pt}
\end{figure}

\begin{theorem} \label{thm:Approximation}
The algorithm {\rm TSP-EDF} attains a competitive ratio of $1 +
O\big(\sqrt{{\rm TSP(G)} / L}\big)$.
\end{theorem}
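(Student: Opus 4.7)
The plan is to follow the BG-style analysis of~\cite{Azar_BMC}, with the switching-cost parameter replaced by ${\rm TSP}(G)$. Let $K=\sqrt{{\rm TSP}(G)\cdot L}$ denote the phase length and $M$ the number of phases in the time horizon. The argument has two ingredients: a per-phase lower bound on $\alg$ in terms of $|S^\ell|$, and a comparison of the aggregate selection $\sum_\ell|S^\ell|$ to the offline optimum $\opt$.

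\emph{Per-phase bound.} In any phase $\ell$, $\rho_\ell$ consists of $|S^\ell|\le K$ unit-time service segments grouped by node and visited in a TSP order of $G$, so the total transition time inside the phase is at most ${\rm TSP}(G)$. Hence the algorithm completes at least $\min(|S^\ell|,\,K-{\rm TSP}(G))\ge |S^\ell|-{\rm TSP}(G)$ of the requests scheduled in $\rho_\ell$ before the phase ends. Because a request is placed in at most one $S^\ell$ (once chosen it is either served within the phase or dropped and never retried),
\[
\alg\;\ge\;\sum_\ell |S^\ell|\;-\;M\cdot {\rm TSP}(G).
\]

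\emph{Comparison with $\opt$.} The core claim is $\sum_\ell|S^\ell|\ge \opt - O(M\cdot {\rm TSP}(G))$, proved by converting any optimal offline schedule into a feasible schedule for the rounded instance with deadlines $d^{*}=K\lfloor d/K\rfloor$. Since $L>10\,{\rm TSP}(G)$ gives $K<L/\sqrt{10}$, every laxity shrinks by less than $K<L/2$; moreover, $L>2K$ ensures that any request with rounded deadline $K\ell$ has arrival time $\le K(\ell-1)$, so it is already present in $R^\ell$. An EDF-exchange argument then shifts every $\opt$-request served after its rounded deadline earlier, and once all deadlines lie on a $K$-grid the problem decouples phase by phase; within each phase EDF is optimal for single-machine unit-job scheduling, and $S^\ell$ is exactly its $K$-prefix on $R^\ell$. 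Tracking the exchange carefully shows that the total loss is at most $O({\rm TSP}(G))$ per phase boundary.

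\emph{Combining and main obstacle.} Chaining the two bounds yields $\alg\ge\opt - O(M\cdot{\rm TSP}(G))$. Using $\opt\le T$, $M\le T/K$, and $K=\sqrt{{\rm TSP}(G)\cdot L}$ we conclude
\[
\frac{\opt}{\alg}\;\le\;\frac{1}{1-O({\rm TSP}(G)/K)}\;=\;1+O\!\left(\sqrt{\tfrac{{\rm TSP}(G)}{L}}\right),
\]
as claimed. The main obstacle is the comparison step: controlling the cascading displacements produced by the EDF exchange across phase boundaries and showing that each boundary crossing loses only $O({\rm TSP}(G))$ requests. The slack $L-K\ge L/2$ guaranteed by the assumption $L>10\,{\rm TSP}(G)$ is precisely what prevents these cascades from growing with the horizon, in analogy with the BG analysis of~\cite{Azar_BMC} where the number of colors plays the role now taken by ${\rm TSP}(G)$.
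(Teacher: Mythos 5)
There is a genuine gap, and it sits in both halves of your argument. First, the aggregation step is charged against the wrong quantity. You bound the total loss by $M\cdot{\rm TSP}(G)$ with $M\le T/K$ and then invoke $\opt\le T$; but this controls the loss by $T\sqrt{{\rm TSP}(G)/L}$, an \emph{additive} term measured against the time horizon, whereas the theorem requires a loss that is an $O(\sqrt{{\rm TSP}(G)/L})$ \emph{fraction of} $\opt$. On a sparse instance (say one request every $L$ steps, so $\opt\approx T/L$ while $M\approx T/K$) your bound $\alg\ge\opt-O(M\cdot{\rm TSP}(G))$ is vacuous. The fix is to charge the per-phase transition loss only to phases in which the algorithm actually serves close to $K$ requests --- your own expression $\min(|S^\ell|,K-{\rm TSP}(G))$ already shows a phase loses nothing unless $|S^\ell|\ge K-{\rm TSP}(G)$, which makes the loss a $O({\rm TSP}(G)/K)$ fraction of the served requests --- and this is exactly what the paper's Lemma~\ref{lemma:TRANSPenalty} does by bounding the number of busy phases by $\lceil\alg(\sigma')/K\rceil-1$.

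Second, the central comparison $\sum_\ell|S^\ell|\ge\opt-O(M\cdot{\rm TSP}(G))$ is asserted, not proved, and the claimed ``$O({\rm TSP}(G))$ loss per phase boundary'' is not the right magnitude. Rounding a deadline down to $K\lfloor d/K\rfloor$ moves it by up to $K=\sqrt{{\rm TSP}(G)L}\gg{\rm TSP}(G)$, and $\opt$ may serve up to $K$ requests in the interval between a rounded deadline and the original one, all of which become infeasible; rescheduling them earlier triggers exactly the cascading displacements you flag as ``the main obstacle.'' The correct resolution is multiplicative, not additive per boundary: the $K$-perturbation theorem of~\cite{Azar_BMC} shows the rounded instance retains a $(1-2K/L)=(1-2\sqrt{{\rm TSP}(G)/L})$ fraction of $\opt$. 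The paper's route --- collapse all requests to a single node to get $\sigma'$ and $\widetilde\sigma$, observe that TSP-EDF coincides with (optimal) EDF on $\widetilde\sigma$ so $\alg(\widetilde\sigma)=\opt(\widetilde\sigma)$, apply the perturbation theorem to relate $\opt(\widetilde\sigma)$ to $\opt(\sigma')\ge\opt(\sigma)$, and separately bound the transition penalty --- avoids the exchange argument entirely. Your decomposition could in principle be repaired, but as written the two quantitative claims it rests on are either wrong in form or unproved.
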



\bibliographystyle{abbrv}
\bibliography{SchedulingColoredPackets}

\appendix
\section{Proofs} \label{appsec:Proofs}

\subsection{Proof of Theorem \ref{thm:LowerBoundMetricSpace}} \label{appsec:MetricSpaceProof}
Let  $G$  be a given metric space on nodes $V$. We use the embedding
from Theorem \ref{thm:GraphToStar}. Let ${\rm S}$, $v_0$ be the
output of the embedding. Let $\sigma({\rm S}, \alg)$ be the sequence
described in Theorem \ref{thm:LowerBoundStar}, when $v_0$ is type A
node and the other nodes are type B. Recall that, by definition, $F
= \sqrt{w({\rm S})L}$. We use $\sigma$ for $\alg$ on $G$. We can
assume, without loss of generality, that $\delta \leq \frac{1}{9}$
since otherwise one may use requests with laxity of $9TSP(G)$ (i.e.,
$\delta = \frac{1}{9}$), and obtain a lower bound of $1.002$.
Note that for any metric space $G$ we have ${\rm MST(G)} \leq {\rm TSP(G)} < 2{\rm MST}(G)$.
Consider the following possible cases, similar to the proof of
Theorem \ref{thm:LowerBoundStar}.

\begin{enumerate}
\item In Termination Case 1 there exists a block $i$ such that, with probability at least 1/4, at the
end of the block there are at least $F/2$ unserved type B requests.
In Theorem \ref{thm:LowerBoundStar} we proved that:
\begin{itemize}
\item The sequence consists of up to $3L$ requests.

\item The expected number of requests $\alg$ missed is at least $F/8 - 1/4$.

\item $\opt$ missed up to ${\rm TSP(G)}$ requests (while transmitting the type B requests
that arrived during the final event).
\end{itemize}
Therefore, the competitive ratio depends only on $F$, TSP(G) and $L$:
\begin{eqnarray*}
\frac{\opt(\sigma)}{E(\alg(\sigma))} &\geq& \frac{3L - {\rm TSP(G)}}{3L - \frac{1}{8}F + 1/4} = \frac{3L - {\rm TSP(G)}}{3L - \frac{1}{8}\left(\sqrt{w({\rm S})L}\right) + 1/4} \\
&=& \frac{3L - {\rm TSP(G)}}{3L - \frac{1}{8}\left(\sqrt{{\rm
MST(G)}L}\right) + 1/4} = 1 + \Omega\left(\sqrt{{\rm TSP(G) }/
L}\right) \ .
\end{eqnarray*}
Here the second equality results from the fact that $w({\rm S}) = {\rm MST(G)}$.

\item In Termination Case 2 there exists a block $i$ such that, with probability at least 1/4, at most $2F$ requests
were served during the block. In Theorem \ref{thm:LowerBoundStar} we
proved that:
\begin{itemize}
\item At most $3L$ requests can be served.

\item The expected number of requests $\alg$ missed is at least $F/4$.

\item $\opt ^\prime$ served 3L type A requests.
\end{itemize}
Therefore, $\opt \geq \opt ^\prime = 3L$ and the competitive ratio depends only on $F$ and $L$:
\begin{eqnarray*}
\frac{\opt(\sigma)}{E(\alg(\sigma))} &\geq& \frac{3L}{3L - \frac{1}{4}F} = \frac{3L}{3L - \frac{1}{4}\left(\sqrt{w({\rm S})L}\right)} \\
&=& \frac{3L}{3L - \frac{1}{4}\left(\sqrt{{\rm MST(G)}L}\right)}
= 1 + \Omega\left(\sqrt{{\rm TSP(G)} / L}\right)\ .
\end{eqnarray*}
Here the second equality results from the fact that $w({\rm S}) = {\rm MST(G)}$.

\item In Termination Case 3 $\alg$ served type A request and at least $F/2$ type B requests at
each block. In Theorem \ref{thm:LowerBoundStar} we proved that:
\begin{itemize}
\item At most $3L$ requests can be served.

\item The expected number of requests $\alg$ missed in each block
due to traveling time is at least $\frac{1}{8} \frac{w({\rm S})}{2}
= \frac{w({\rm S})}{16}$.

\item $\opt ^\prime$ served $3L$ type A requests. Therefore $\opt \geq \opt ^\prime = 3L$.
\end{itemize}
By the first property required by this theorem, each sequence of
nodes in $G$ requires more traveling time than in S. Therefore, the
expected number of requests $\alg$ missed per block is at least
$\frac{w({\rm S})}{16}$. Since the number of blocks is
$\frac{1}{3}\sqrt{\frac{L}{w({\rm S})}}$, we conclude that the
competitive ratio is:
\begin{align*}
\frac{\opt(\sigma)}{E(\alg(\sigma))} &\geq  \frac{3L}{3L - \left(\frac{1}{3}\sqrt{L/w(S)} \right)\frac{w({\rm S})}{16}}\\
&= \frac{3L}{3L - \left(\frac{1}{3}\sqrt{L/{\rm MST(G)}} \right)\frac{{\rm MST(G)}}{16}}
= 1 + \Omega\left(\sqrt{TSP(G) / L}\right)\ .
\end{align*}
Here the first equality result from the fact that $w(S) = MST(G)$.

Note that in all 3 cases we got $1 + \Omega\left(\sqrt{{\rm TSP(G)}
/ L}\right) = 1 + \Omega\big(\sqrt{\delta}\big)$.
\\ This completes the proof of the Theorem \ref{thm:LowerBoundMetricSpace}.
\end{enumerate}

\subsection{Proof of Theorem \ref{thm:ApproximationB}} \label{appsec:OnlineAlgoAnalysisB}
First we need to demonstrate that the output $\rho$ is feasible.
Specifically, we need  to prove that every request $i$ is served
during the time window $[r_{i}, d_{i}]$, and that there is a
traveling time of length $w(v_i, v_j)$ between serving any two
successive requests with different nodes $v_i$ and $v_j$.

\begin{lemma} \label{lemma:CorrectnessB}
The algorithm ORIENT-WINDOW generates a valid output.
\end{lemma}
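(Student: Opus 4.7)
The plan is to verify the two feasibility requirements separately: (i) every served request is executed inside its window $[r,d]$, and (ii) consecutive services at different nodes are separated by the correct travel time.

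For (i), I would fix a request $(r,d,v) \in R^\ell$. Being unserved at time $K(\ell-1)$ implies that it has already been released, so $r \leq K(\ell-1)$. Because the decreased deadline $K\floor{d/K}$ is a multiple of $K$, the eligibility test ``decreased deadline not exceeded'' forces $K\floor{d/K} \geq K\ell$ (otherwise the request could not be legally served anywhere inside phase $\ell$), and hence $d \geq K\floor{d/K} \geq K\ell$. Any service time $t$ that the algorithm uses during phase $\ell$ satisfies $t \in [K(\ell-1), K\ell]$, so the chain $r \leq K(\ell-1) \leq t \leq K\ell \leq d$ shows $t \in [r,d]$.

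For (ii), the schedule $\rho_\ell$ visits nodes in the order $v_{i_1}, v_{i_2}, \ldots, v_{i_r}$ produced by the orienteering subroutine, whose tour has total length at most $\Delta(G)$. Between the last request at $v_{i_k}$ and the first at $v_{i_{k+1}}$, the algorithm walks along the corresponding tour edge in time exactly $w(v_{i_k}, v_{i_{k+1}})$, while consecutive requests at the same node need no travel. At the start of phase $\ell$ the algorithm additionally walks from its previous location to $v_{i_1}$, which costs at most $\Delta(G)$ by definition of the diameter. With $K = 3\Delta(G)$ the combined travel is at most $2\Delta(G)$, leaving room for at least $\Delta(G)$ unit services; whenever the suffix of $\rho_\ell$ would run past the boundary $K\ell$ it is dropped, so only a prefix that respects both the phase boundary and the travel requirement is actually executed.

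The main obstacle, and essentially the only subtlety, is pinning down the precise meaning of ``decreased deadline not exceeded'' so that $R^\ell$ contains only requests with $K\floor{d/K} \geq K\ell$; any weaker reading would admit a request whose decreased deadline equals $K(\ell-1)$, which could not be served anywhere in phase $\ell$ without violating its original deadline. Once this convention is fixed, both (i) and (ii) follow directly from the construction of $\rho_\ell$ and the budget-$\Delta(G)$ guarantee of the orienteering solver.
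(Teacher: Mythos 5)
Your proposal is correct and follows essentially the same approach as the paper's (very terse) proof: membership in $R^\ell$ forces the decreased deadline, and hence the original deadline, to be at least $K\ell$, so nothing served during phase $\ell$ is expired, and truncating the suffix of $\rho_\ell$ at the phase boundary keeps the schedule well defined. You are more explicit than the paper about the travel-time accounting and about the reading of ``decreased deadline not exceeded,'' both of which the paper leaves implicit, but these are elaborations rather than a different argument.
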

\begin{proof}
All request in $R^\ell$ have deadlines of at least $K\ell$. Hence,
when they are served they have not been expired yet. Serving at
phase $\ell$ ends at time $K\ell$ even if not all requests in
$R^\ell$ are served. Hence, the phases of the algorithm are well
defined.
\end{proof}

Now we analyze the performance guarantee of the algorithm. Let
$\sigma'$ be the following modification of $\sigma$. It consists of
all requests in $\sigma$ such that a request $([r,d],v) \in \sigma$
is replaced by a request $([K \ceil{r/K}, K \floor{d/K}], v) \in
\sigma'$. Hence, the release and deadline times of each request in
$\sigma'$ are aligned with the start/end time of the corresponding
phase so that the time window of each request is fully contained in
the time window of that request according to $\sigma$.

The notion of \emph{$\lambda$-perturbation}, defined in
\cite{Azar_BMC}, is as follows: An input sequence $\widehat{\delta}$
is a \emph{$\lambda$-perturbation} of $\delta$ if $\widehat{\delta}$
consists of all requests of $\delta$, and each request
$[\widehat{r},\widehat{d}] \in \widehat{\delta}$ corresponding to
request $[r,d] \in \delta$ satisfies $\widehat{r} - r \leq \lambda$
and $d - \widehat{d} \leq \lambda$.
\\ By definition, $\sigma'$ is $K$-perturbation of $\sigma$.

We use Theorem $2.2$ from \cite{Azar_BMC}:
\begin{theorem}
Suppose $\widehat{\delta}$ is a $\lambda$-perturbation of $\lambda$
then $\opt(\widehat{\delta}) = (1 - 2\lambda/L)\opt(\delta)$ (where
L is the minimum laxity).
\end{theorem}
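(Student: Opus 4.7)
The plan is to establish the intended inequality $\opt(\widehat{\delta}) \ge (1 - 2\lambda/L)\,\opt(\delta)$ by a direct construction: starting from an optimal schedule $\pi$ for $\delta$, I would produce a feasible schedule for $\widehat{\delta}$ that retains at least a $(1 - 2\lambda/L)$ fraction of $\pi$'s served requests.

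First I would record the setup: $\pi$ serves $m = \opt(\delta)$ requests at distinct times $t_1 < \cdots < t_m$ with $r_i \le t_i \le d_i$ and $d_i - r_i \ge L$ for every $i \in \pi$. Since $\widehat{r}_i \le r_i + \lambda$ and $\widehat{d}_i \ge d_i - \lambda$, the perturbed window $[\widehat{r}_i, \widehat{d}_i]$ has length at least $L - 2\lambda$. A request $i \in \pi$ whose original serving time $t_i$ already lies in $[\widehat{r}_i, \widehat{d}_i]$ can be reused verbatim; the only potentially bad requests are those with $t_i - r_i < \lambda$ (served very close to release) or $d_i - t_i < \lambda$ (served very close to deadline).

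The core step would be a shift argument. For every integer $s \in \{0, 1, \ldots, L-1\}$, consider the schedule $\pi_s$ that idles for $s$ time units and then executes $\pi$ shifted forward by $s$, keeping request $i$ iff $t_i + s \in [\widehat{r}_i, \widehat{d}_i]$. A uniform translation preserves feasibility for the TSP constraints---distinctness of serving times, transition times between nodes, and unit service durations all remain unchanged---so $\pi_s$ is a valid schedule after invalid requests are dropped. For each fixed $i$ the set of good shifts forms an interval $[\widehat{r}_i - t_i,\ \widehat{d}_i - t_i]$ of length at least $L - 2\lambda$, so at most $2\lambda$ shifts in $\{0,\ldots,L-1\}$ are bad for $i$. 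Averaging the retention count over the $L$ possible shifts then yields a particular $s^*$ for which $\pi_{s^*}$ keeps at least $(1 - 2\lambda/L)\,m$ requests, giving the claimed bound and in turn the desired $\opt$ inequality.

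The main obstacle is the boundary effect: when $t_i$ sits near an endpoint of $[r_i, d_i]$, the good-shift interval $[\widehat{r}_i - t_i,\ \widehat{d}_i - t_i]$ can protrude outside $\{0, \ldots, L-1\}$, so the naive averaging loses efficiency. To handle this I would partition $\pi$'s serving times into consecutive segments of length $L$ and apply a separate shift within each segment, exploiting the fact that each request's window of length at least $L$ ensures that the good-shift interval for any request in the segment covers the full shift range. A small amount of bookkeeping at segment boundaries keeps the shifted segments from colliding, and the per-segment retention counts add up to the clean $(1 - 2\lambda/L)$ factor. This segment-level analysis is the technical heart of the argument and follows the template of \cite{Azar_BMC}.
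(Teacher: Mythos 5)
First, a point of reference: the paper contains no proof of this statement --- it is Theorem 2.2 quoted from \cite{Azar_BMC} (note the quoted version also carries two typos: it should read ``$\lambda$-perturbation of $\delta$,'' and the equality should be the inequality $\opt(\widehat{\delta}) \geq (1-2\lambda/L)\opt(\delta)$). So there is no in-paper proof to compare against, and I can only assess your argument on its own terms.

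Your argument has a genuine gap, and it sits exactly where you flag ``the main obstacle'': the averaging step does not merely lose efficiency at the boundary, it fails outright. The good-shift set for request $i$ is $[\widehat{r}_i - t_i,\ \widehat{d}_i - t_i]$, and if $\pi$ serves $i$ at or near its deadline ($t_i \approx d_i$) this interval is contained in roughly $[\lambda - L,\ 0]$, so essentially \emph{every} shift in $\{0,\dots,L-1\}$ is bad for $i$, not at most $2\lambda$ of them. An optimal schedule may well serve every request at its deadline (a steady rate-one stream of requests forces this), in which case the average over $s\in\{0,\dots,L-1\}$ retains almost nothing; and negative shifts are unavailable because the server cannot act before time $0$. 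The segment-based repair does not close the gap: the claim that within a length-$L$ segment ``the good-shift interval for any request in the segment covers the full shift range'' is unsupported and false for the same deadline-served requests, and giving consecutive segments different shifts creates collisions whose resolution can cost up to an entire segment's worth of requests --- the ``small amount of bookkeeping'' is precisely the unproved heart of the matter. A route that does work (and is presumably the one in \cite{Azar_BMC}) is an exchange/deficiency argument rather than a shift: fix the set $P$ of requests served by an optimal schedule for $\delta$ and bound, via a Hall-type condition, how many of them cannot be packed into the shrunk windows. Any time interval that is overloaded with respect to the shrunk windows is, after widening by $\lambda$ on each side, an interval into which the feasible schedule $\pi$ packs all of those requests, so its deficiency is at most about $2\lambda$, while each such maximal deficient interval must contain at least $L-2\lambda$ requests of $P$ (every shrunk window has length at least $L-2\lambda$); summing over the disjoint maximal deficient intervals loses only an $O(\lambda/L)$ fraction.
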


By applying the Theorem we get
\begin{equation} \label{alg:PertubB}
\opt(\sigma') \geq \big(1 -
2\frac{3\Delta(G)}{9\Delta(G)}\big)\opt(\sigma) = \opt(\sigma) / 3.
\end{equation}

Let $\opt ^\prime$ be an optimal offline algorithm that is not
allowed to serve any request that was served by the online
algorithm. It is clear that $\opt(\sigma) \leq \opt ^\prime(\sigma)
+ \alg(\sigma)$. Moreover, at each time unit the set of the unserved
requests of $\opt ^\prime(\sigma)$ is a subset of the unserved
requests of $\alg(\sigma)$.

\begin{lemma} \label{lemma:ORIENTApp}
$\alg(\sigma') \geq \opt ^\prime(\sigma') / 9$
\end{lemma}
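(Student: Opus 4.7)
The plan is to prove the bound phase by phase: for each phase $\ell$ we argue $|A_\ell| \geq |O_\ell|/9$, where $A_\ell$ and $O_\ell$ denote the requests served by $\alg$ and by $\opt'$, respectively, during phase $\ell$ on the perturbed input $\sigma'$, and then sum over $\ell$.

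The first step is to show $O_\ell \subseteq R^\ell$. In $\sigma'$ every request has its time window aligned to multiples of $K = 3\Delta(G)$, and since the minimum laxity satisfies $L > 9\Delta(G) = 3K$, each such window spans at least one full phase. Hence a request that $\opt'$ serves at some moment inside phase $\ell$ has time window containing all of phase $\ell$; in particular it is released before $K(\ell-1)$ and its decreased deadline is at least $K\ell$. Since $\opt'$ is forbidden to touch any request that $\alg$ served, each request of $O_\ell$ is unserved by $\alg$ at time $K(\ell-1)$, and therefore belongs to $R^\ell$. This means that the orienteering instance that ORIENT-WINDOW solves in phase $\ell$ contains all of $O_\ell$ as feasible prizes.

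The second step lower bounds the optimum orienteering prize $P^*$ at phase $\ell$ in terms of $|O_\ell|$. The schedule of $\opt'$ during phase $\ell$ has total length at most $K = 3\Delta(G)$, so splitting it into three consecutive time intervals of length $\Delta(G)$ each, one of them contains at least $|O_\ell|/3$ served requests. Within that interval the travel cost is at most $\Delta(G)$, so the corresponding walk is a feasible solution to the unrooted orienteering instance on $R^\ell$ with budget $\Delta(G)$. Hence $P^* \geq |O_\ell|/3$, and with a constant-factor orienteering approximation (taking a $3$-approximation to match the advertised constant) the algorithm obtains a walk of prize at least $|O_\ell|/9$.

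The third step is to confirm that $\alg$ actually collects this prize inside phase $\ell$. The algorithm travels at most $\Delta(G)$ from its position at the end of phase $\ell-1$ to the first node of the orienteering walk, at most $\Delta(G)$ traversing the walk itself, and spends the remainder of the phase on unit-time service. So if the computed prize is at most $\Delta(G)$ the walk fits in the phase and $|A_\ell|$ equals the full prize $\geq |O_\ell|/9$; otherwise at least $3\Delta(G) - 2\Delta(G) = \Delta(G)$ units of service happen, and since $|O_\ell| \leq 3\Delta(G)$ trivially, $|A_\ell| \geq \Delta(G) \geq |O_\ell|/3 \geq |O_\ell|/9$. Summing over phases yields the lemma.

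The main obstacle is juggling the three constant losses so that they combine to the stated factor of $9$: the three-way split of $\opt'$'s per-phase tour (factor $3$), the orienteering approximation (factor $3$), and the per-phase travel overhead that forces ORIENT-WINDOW to use up to $2\Delta(G)$ out of the $3\Delta(G)$ budget on travel. One must verify that the bottleneck case, in which the prize of the computed orienteering walk is small ($\leq \Delta(G)$) so that the phase-length constraint is not binding, still yields a factor $9$; and that the complementary case, where the phase-length constraint truncates the walk, gives at worst factor $3$.
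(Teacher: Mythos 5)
Your proof is correct and follows essentially the same route as the paper's: a per-phase comparison in which the $3\Delta(G)$-length phase is split into three $\Delta(G)$-length intervals to show the orienteering optimum is at least a third of what $\opt^\prime$ serves, combined with the $3$-approximation guarantee and a case split on whether the computed prize exceeds $\Delta(G)$. The only differences are presentational: you bound the orienteering optimum directly where the paper argues by contradiction, and you make explicit the containment $O_\ell \subseteq R^\ell$ that the paper leaves implicit.
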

\begin{proof}
We prove the Lemma for each phase separately. Recall that in each
specific phase $\ell$ we compute $3$-approximation to the unrooted
orienteering problem \cite{bansal2004approximation} with budget
$\Delta(G)$ where the prize of a node $v_j$ is the number of
requests in $S_j^{\ell}$ (actually there is a
$(2+\epsilon)$-approximation \cite{chekuri2012improved} but for
simplicity we use the $3$-approximation). Let $x$ be the total prize
in the solution for phase ${\ell}$. We separate into two cases. If
$x \geq \Delta(G)$ then $\alg$ serves at least $\Delta(G)$ requests.
It is clear that he needs at most $3\Delta(G)$ time units to serve
the requests (at most $\Delta(G)$ time units to travel to the first
node in the solution, at most $\Delta(G)$ time units to travel
between the nodes of the solution and $\Delta(G)$ time units to
serve the requests). Since the size of each phase is $3\Delta(G)$, there are
enough time units to serve $\Delta(G)$ requests. $\opt ^\prime$ can
serve up to $3\Delta(G)$ requests (a request per time unit). Hence,
$\alg(\sigma') \geq \opt ^\prime(\sigma') / 3$ in phase ${\ell}$.

If $x < \Delta(G)$ $\alg$ serves $x$ requests follows by a similar
argument to the one described in the previous case. Assume by a
contradiction that $\opt ^\prime$ served more than $9x$ requests.
Since the size of the phase is $3\Delta(G)$ it is clear that there exists
$\Delta(G)$ consecutive time units during the phase in which $\opt
^\prime$ served more than $9x/3 = 3x$ requests. This is a
contradiction to the correctness of the approximation algorithm
(recall that $\opt ^\prime$ is not allowed to serve requests that
were served by $\alg$. Therefore, the set of the unserved requests of
$\opt ^\prime(\sigma)$ is a subset of the unserved requests of
$\alg(\sigma)$). Hence, $\opt ^\prime$ served up to $9x$ requests
and $\alg(\sigma') \geq \opt ^\prime(\sigma') / 9$ in phase
${\ell}$.
\end{proof}

Now are ready to prove that {\rm ORIENT-WINDOW} attains a
competitive ratio of $O(1)$.

Using the previously stated results, we obtain that

$$ \alg(\sigma)  =  \alg(\sigma')
 \geq  \opt ^\prime(\sigma') / 9  \geq  \opt ^\prime(\sigma) /
(9 * 3)  =  \opt ^\prime(\sigma) / 27$$.

The first equality follows by the definition of the algorithm. The
first inequality results from Lemma~\ref{lemma:ORIENTApp} The second
inequality holds by Equation (\ref{alg:PertubB}).

Combining the result with the relation between  $\opt
^\prime(\sigma)$ and $\opt(\sigma)$ we get:

$$\opt(\sigma) \leq \opt ^\prime(\sigma) + \alg(\sigma) \leq 28\alg(\sigma)$$

This complete the proof.

\subsection{Proof of Theorem \ref{thm:Approximation}} \label{appsec:OnlineAlgoAnalysis}
First we need to demonstrate that the output $\rho$ is feasible.
Specifically, we need  to prove that every request $i$ is served
during the time window $[r_{i}, d_{i}]$, and that there is a
traveling time of length $w(v_i, v_j)$ between serving any two
successive requests with different nodes $v_i$ and $v_j$.

\begin{lemma} \label{lemma:Correctness}
The algorithm TSP-EDF generates a valid output.
\end{lemma}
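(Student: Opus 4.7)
The plan is to mirror the short correctness argument that was given for ORIENT-WINDOW in Lemma~\ref{lemma:CorrectnessB}, since TSP-EDF has exactly the same phase structure and only the rule by which $S^\ell$ is selected from $R^\ell$ has changed (the $K$-length prefix of EDF, rather than the orienteering tour of Lemma~\ref{lemma:CorrectnessB}). Two things must be verified for every request that TSP-EDF actually serves: it is served inside its original time window $[r_i,d_i]$, and the schedule $\rho_\ell$ respects the traveling time $w(v_i,v_j)$ between any two consecutive serves at different nodes.

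For the time-window part, I would first show that every request $(r,d,v)\in R^\ell$ satisfies $r\le K(\ell-1)$ and $d\ge K\ell$. The release-time bound is immediate: at time $K(\ell-1)$ the algorithm only sees requests that have already arrived. The deadline bound follows from the fact that, at the beginning of phase $\ell$, the decreased deadline $K\lfloor d/K\rfloor$ is not yet exceeded, which forces $K\lfloor d/K\rfloor\ge K\ell$ and hence $d\ge K\ell$ (this is precisely the convention used in the proof of Lemma~\ref{lemma:CorrectnessB}). Since $S^\ell\subseteq R^\ell$ and each request in $\rho_\ell$ is served at some time $t\in[K(\ell-1),K\ell]$, I obtain $r\le K(\ell-1)\le t\le K\ell\le d$, so every served request lies inside its window.

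For the travel-time part, I would simply unwind the definition of $\rho_\ell$: it serves all requests of $S^\ell_{i_1}$ back-to-back at $v_{i_1}$, then transitions along the fixed TSP ordering to $v_{i_2}$ while idling for the required $w(v_{i_1},v_{i_2})$ time units, then serves $S^\ell_{i_2}$ back-to-back, and so on. Thus $\rho_\ell$ is by construction a valid physical trajectory in the metric space; a suffix of $\rho_\ell$ that happens not to fit in the length-$K$ phase is dropped, as explicitly stated by the algorithm, which also makes the phases well-defined and keeps the algorithm's internal clock synchronized with real time.

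The only real subtlety, and what I would flag as the main point requiring care, is making the inequality $d\ge K\ell$ precise under the correct interpretation of ``decreased deadline not exceeded'' at the start of phase $\ell$; reading the statement consistently with Lemma~\ref{lemma:CorrectnessB} settles this cleanly. Beyond that, there is no obstacle: unlike Theorem~\ref{thm:Approximation}, correctness uses none of the sizing relations among $K$, $L$, and $\mathrm{TSP}(G)$, so the proof is essentially a verbatim copy of Lemma~\ref{lemma:CorrectnessB} with $R^\ell$ and $S^\ell$ relabeled.
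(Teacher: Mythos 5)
Your proposal is correct and follows exactly the route the paper takes: the paper's proof of this lemma simply defers to the proof of Lemma~\ref{lemma:CorrectnessB} (every request in $R^\ell$ has deadline at least $K\ell$, so nothing served in phase $\ell$ has expired, and phases are well defined since unserved suffixes of $\rho_\ell$ are dropped), which is precisely the argument you spell out in more detail. No discrepancy to report.
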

\begin{proof}
The  proof is similar to the proof of Lemma
\ref{lemma:CorrectnessB}.
\end{proof}

Now we analyze the performance guarantee of the algorithm. First we
define two input sequences $\sigma'$ and $\widetilde{\sigma}$, which
are modifications of $\sigma$. The input sequence $\sigma'$ consists
of all requests in $\sigma$, but modifies the node of the requests
to a fixed node $v'$. Specifically, each request $([r,d],v) \in
\sigma$ defines a request $([r,d],v') \in \sigma'$. The input
sequence $\widetilde{\sigma}$ consists of all requests in $\sigma$
such that a request $([r,d],v) \in \sigma$ is replaced by a request
$([K \ceil{r/K}, K \floor{d/K}], v') \in \widetilde{\sigma}$, where
$v'$ is a fixed node. Hence, all requests in $\widetilde{\sigma}$
have the same node, and the release and deadline times of each
request in $\widetilde{\sigma}$ are aligned with the start/end time
of the corresponding phase so that the time window of each request
is fully contained in the time window of that request according to
$\sigma$.

\begin{lemma} \label{lemma:SigmaTildaEq}
$\opt(\widetilde{\sigma}) = \alg(\widetilde{\sigma})$.
\end{lemma}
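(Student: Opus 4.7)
The plan is to show that on $\widetilde{\sigma}$, the behavior of TSP-EDF simplifies dramatically and coincides with an optimal schedule, because both the spatial structure (one node) and the temporal structure (phase-aligned windows) of $\widetilde{\sigma}$ are tailored to match the phase decomposition used by the algorithm.

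First I would observe that on $\widetilde{\sigma}$ every request is at the same node $v'$, so there is zero travel time and exactly $K$ unit-size requests fit in each phase of length $K$. Moreover, every release time and every deadline in $\widetilde{\sigma}$ is already a multiple of $K$ by construction, so the deadline-reduction step at the beginning of each phase has no effect, and the TSP ordering over nodes is vacuous (there is only one distinct node). Consequently TSP-EDF on $\widetilde{\sigma}$ reduces to the following transparent rule: at the start of each phase $\ell$, let $R^\ell$ be the set of unserved requests whose deadline is at least $K\ell$, and serve the $\min(K,|R^\ell|)$ of them with the earliest deadlines during the phase.

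Next I would observe that, since there is no travel on $\widetilde{\sigma}$, any offline optimum OPT is also constrained to serve at most $K$ requests per phase, and a request with window $[Ka,Kb]$ can be served in any phase $\ell$ with $a+1\le \ell\le b$. The problem thus reduces to assigning requests to phases with capacity $K$, and I claim the greedy rule above is optimal for this restricted scheduling problem. Let $A_\ell$ (respectively $O_\ell$) denote the set of requests served in phase $\ell$ by TSP-EDF (respectively by OPT). By induction on $\ell$, I would modify OPT phase by phase so that $O_\ell = A_\ell$ without decreasing $\sum_{\ell'} |O_{\ell'}|$. In the inductive step, if $O_\ell$ has spare capacity, any request in $A_\ell\setminus O_\ell$ that OPT serves in a later phase can be moved to phase $\ell$, and any such request OPT never serves can simply be added in phase $\ell$. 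If phase $\ell$ is full in OPT, then for any $r\in A_\ell\setminus O_\ell$ and $r'\in O_\ell\setminus A_\ell$ the EDF choice guarantees that $r'$'s deadline is no earlier than $r$'s deadline; hence $r'$ is still feasible in whatever later phase OPT used for $r$, and we may swap $r$ and $r'$ between the two phases, preserving feasibility and throughput.

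After finitely many such exchanges the transformed OPT agrees with TSP-EDF in every phase, giving $\opt(\widetilde{\sigma})=\alg(\widetilde{\sigma})$. The main obstacle is the bookkeeping in the exchange argument, in particular when ties between modified deadlines arise (which can be resolved by fixing a consistent tie-breaking rule) and when OPT never serves the request~$r$ at all (which only makes the exchange easier, since the slot for~$r'$ is simply vacated). None of these cases requires any global rearrangement, so the induction carries through and yields the claimed equality.
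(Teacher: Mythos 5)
Your proof is correct and follows essentially the same route as the paper: both arguments first observe that on $\widetilde{\sigma}$ the deadline rounding is vacuous, there is no travel time, and TSP-EDF degenerates to plain phase-by-phase EDF at a single node. The only difference is that the paper then simply invokes the known optimality of EDF for unit-size, unit-value packet scheduling, whereas you re-derive that fact via a standard exchange argument; your exchange step is sound, since the induction hypothesis that the modified optimum agrees with the algorithm on phases $1,\dots,\ell-1$ guarantees that $r'$ was available to the algorithm in phase $\ell$, so the EDF choice indeed yields $d_{r'} \geq d_{r}$ and the swap preserves feasibility.
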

\begin{proof}
Note that algorithm TSP-EDF has three modification with respect to EDF:
\begin{itemize}
\item request deadline times are modified to $K \floor{d/K}$.

\item request release times are modified to $K \ceil{r/K}$
(because in each phase only requests released during previous phases
are served).

\item Traveling time units are added between serving requests from different nodes.

\end{itemize}

The release and deadline times of the requests in
$\widetilde{\sigma}$ are aligned and all the requests have the same
node. Hence, $\alg$'s schedule is identical to EDF's schedule. Since
EDF is optimal for sequences that consist of requests with one node
(observe that this is the same as scheduling packets with unit size
and unit value), $\opt(\widetilde{\sigma}) =
\alg(\widetilde{\sigma})$.
\end{proof}

By definition, $\widetilde{\sigma}$ is $K$-perturbation of
$\sigma'$, and the nodes of all requests are identical. Hence,
Theorem 2.2 from \cite{Azar_BMC} (see Section
\ref{appsec:OnlineAlgoAnalysisB} for the definitions of perturbation
and Theorem 2.2) yields the following inequality:
\begin{equation} \label{alg:Pertub}
\opt(\widetilde{\sigma}) \geq \left(1 - 2\sqrt{{\rm TSP(G) }/
L}\right)\opt(\sigma').
\end{equation}

\begin{lemma} \label{lemma:TRANSPenalty}
$\alg(\sigma) \geq \big(1 - \sqrt{{\rm TSP(G)} / L}\,\big)\alg(\sigma')$.
\end{lemma}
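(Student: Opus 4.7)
The plan is a per-phase comparison of $\alg$'s two executions. In each phase $\ell$ of length $K=\sqrt{{\rm TSP(G)}\,L}$, on $\sigma'$ all requests share one node so no traveling is done, and the algorithm serves the entire $K$-prefix $S^{\ell}_{\sigma'}$, giving $y^{\sigma'}_\ell=|S^{\ell}_{\sigma'}|$. On $\sigma$ the traveling time within a phase is at most ${\rm TSP(G)}$ (the tour visits only nodes touched by $S^\ell_\sigma$, in TSP order), so at most ${\rm TSP(G)}$ requests of $\rho_\ell$ are truncated from the suffix, yielding $y^\sigma_\ell \geq \min(|S^\ell_\sigma|,\,K-{\rm TSP(G)})$. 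I aim to convert these into the per-phase ratio $y^\sigma_\ell\geq(1-{\rm TSP(G)}/K)\,y^{\sigma'}_\ell$ and then sum over $\ell$; since ${\rm TSP(G)}/K=\sqrt{{\rm TSP(G)}/L}$, this gives the lemma.

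The crux is justifying $|S^\ell_\sigma|\geq |S^\ell_{\sigma'}|$. I would prove by induction on $\ell$ the pool-inclusion invariant that the set $P_\ell$ of unserved, non-expired requests on $\sigma$ at the start of phase $\ell$ contains the corresponding set $P'_\ell$ on $\sigma'$ (identifying requests across $\sigma$ and $\sigma'$ by their $(r,d)$ coordinates, since the two inputs differ only in node labels). The base case is trivial. For the inductive step I use two facts: on $\sigma'$ the algorithm serves precisely the whole of $S^\ell_{\sigma'}$; and by EDF-monotonicity, since $P_\ell \supseteq P'_\ell$ by induction, the $K$-th earliest deadline of $P_\ell$ is no later than the $K$-th earliest of $P'_\ell$, so every request in $P'_\ell\setminus S^\ell_{\sigma'}$ (deadline strictly later than the $K$-th of $P'_\ell$) also lies outside $S^\ell_\sigma$ and is therefore not served by $\alg$ on $\sigma$ in phase $\ell$. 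Hence such a request remains in both pools at phase $\ell+1$, preserving the inclusion.

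Given the invariant, $|S^\ell_\sigma|=\min(|P_\ell|,K)\geq\min(|P'_\ell|,K)=|S^\ell_{\sigma'}|$. A short case split then yields the per-phase ratio: if $|S^\ell_{\sigma'}|\leq K-{\rm TSP(G)}$ then $y^\sigma_\ell\geq|S^\ell_{\sigma'}|=y^{\sigma'}_\ell$; otherwise $y^\sigma_\ell\geq K-{\rm TSP(G)}\geq((K-{\rm TSP(G)})/K)\,y^{\sigma'}_\ell$ since $y^{\sigma'}_\ell\leq K$. Summing over phases and using $K=\sqrt{{\rm TSP(G)}\,L}$ completes the proof.

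The main obstacle is the pool-inclusion invariant. The subtle point is that on $\sigma$ the algorithm serves within a phase in a TSP (not EDF) order and can drop a suffix of $\rho_\ell$ due to traveling overrun, so one must verify that it never serves a request that the $\sigma'$-algorithm leaves for a later phase; the EDF-monotonicity argument described above is exactly what is needed. Once the invariant is in hand the remaining bookkeeping is routine.
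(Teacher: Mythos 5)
Your proof is correct and follows the same underlying idea as the paper's: each phase of the $\sigma$-run spends at most ${\rm TSP(G)}$ time units traveling, and only phases in which the $\sigma'$-run serves close to $K$ requests can actually lose anything, so the total loss is at most a $\sqrt{{\rm TSP(G)}/L}$ fraction of $\alg(\sigma')$. The paper packages this as a single amortized inequality, $\alg(\sigma)\geq\alg(\sigma')-\left(\lceil\alg(\sigma')/K\rceil-1\right){\rm TSP(G)}$, justified only by the remark that the worst case is when no phase of the $\sigma'$-run is idle; your per-phase multiplicative bound, together with the explicit pool-inclusion invariant $P_\ell\supseteq P'_\ell$ proved by EDF-monotonicity, supplies the coupling argument that this remark leaves implicit, so while the route is essentially the same, your version fills a genuine gap in rigor.
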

\begin{proof}
The difference between the schedule TSP-EDF generated for $\sigma$
and the schedule it generates for $\sigma'$ is that requests might
be dropped at the end of each phase in $\sigma$ due to traveling
time. The worst case for $\sigma$ is when there are no idle time
units in any of the phases of $\sigma'$. Otherwise, the idle time
units might be used for traveling time. Therefore, there are at most
$\lceil \alg(\sigma') / K \rceil - 1$ phases in which algorithm
TSP-EDF serves requests (the $-1$ term is due to the fact that the
algorithm does not serve any request during the first phase). Since
there are at most TSP(G) traveling time units in each phase,
we obtain the following inequality:
\begin{eqnarray*}
\alg(\sigma) & \geq & \alg(\sigma') - \left(\lceil \alg(\sigma') / K \rceil - 1\right){\rm TSP(G) }\\
& = & \alg(\sigma') - \left(\left \lceil \frac{\alg(\sigma')}{\sqrt{{\rm TSP(G)}L}} \right \rceil - 1\right){\rm TSP(G)}\\
& \geq & \left(1 - \sqrt{{\rm TSP(G) }/ L}\right)\alg(\sigma') \ .
\end{eqnarray*}
\end{proof}

Now are ready to prove that {\rm TSP-EDF} attains a competitive
ratio of $1 + O\big(\sqrt{{\rm TSP(G)} / L}\big)$.

Using the previously stated results, we obtain that
\begin{eqnarray*}
\alg(\sigma) & \geq & \left( 1 - \sqrt{{\rm TSP(G) }/ L} \right) \alg(\sigma')
= \left( 1 - \sqrt{{\rm TSP(G) } / L} \right) \alg(\widetilde{\sigma}) \\
& = & \left( 1 - \sqrt{{\rm TSP(G) } / L} \right) \opt(\widetilde{\sigma}) \\
& \geq & \left( 1 - \sqrt{{\rm TSP(G) } / L} \right) \left( 1 - 2\sqrt{{\rm TSP(G) } / L} \right) \opt(\sigma') \\
& \geq & \left( 1 - 3\sqrt{{\rm TSP(G) } / L} \right) \opt(\sigma) \ .
\end{eqnarray*}
The first inequality results from Lemma~\ref{lemma:TRANSPenalty}.
The first equality follows by the definition of the algorithm. The
second equality holds by lemma~\ref{lemma:SigmaTildaEq}. The second
inequality results from Equation (\ref{alg:Pertub}). Finally, the
last inequality holds because $\sigma'$ is similar $\sigma$, but all
requests have the same node. This implies that any schedule feasible
for $\sigma$ is also feasible for $\sigma'$, and thus $\opt(\sigma')
\geq \opt(\sigma)$.~

We conclude that $\frac{\opt(\sigma)}{\alg(\sigma)} \geq 1 +
O\big(\sqrt{{\rm TSP(G)} / L}\big)$

\section{Figures} \label{appsec:Figures}
\begin{figure}[htbp]
\begin{center}
\scalebox{0.40}
{
\includegraphics{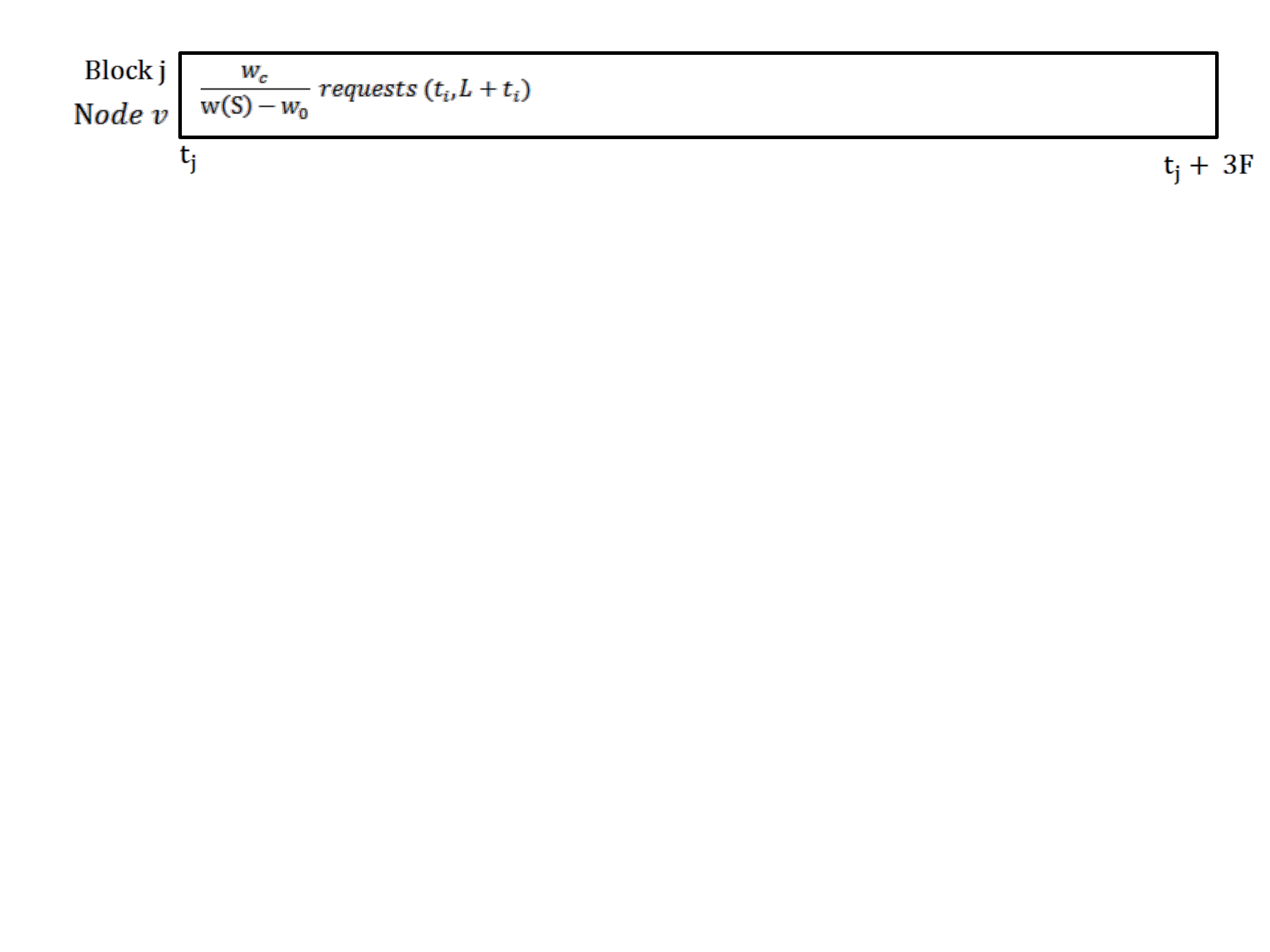}
}
\end{center}
\vspace{-190pt} \caption{Block's structure. The pair $(r,d)$
represent release time $r$ and deadline $d$. Note that all the
requests arrive at once in the beginning of the block.}
\label{figure:Block}
\end{figure}

\begin{figure}[htbp]
\begin{center}
\scalebox{0.40}
{
\includegraphics{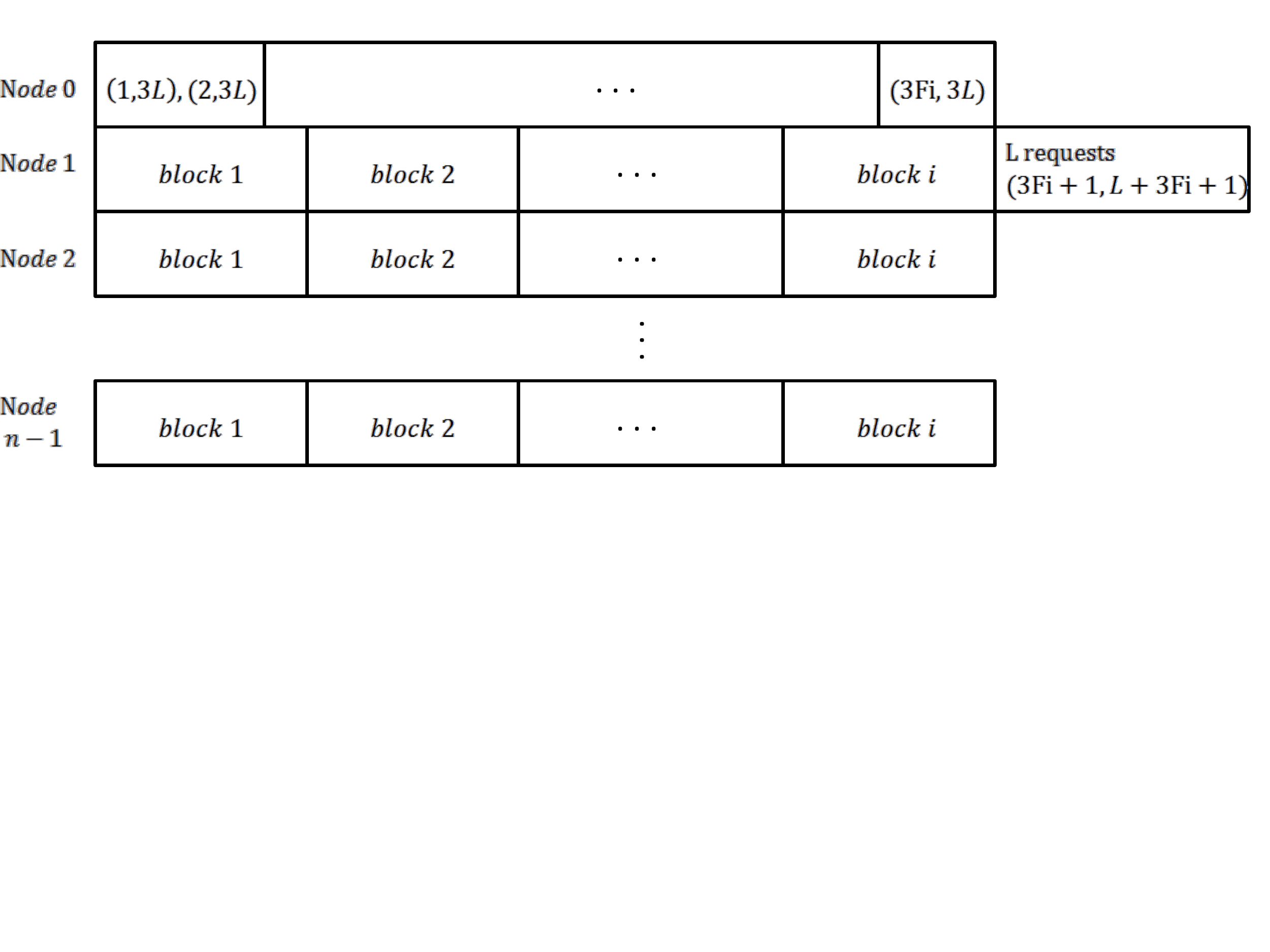}
}
\end{center}
\vspace{-110pt}
\caption{Sequence structure for Termination Case 1. See Figure \ref{figure:Block}
for blocks structure.}
\label{figure:Case1}
\end{figure}

\begin{figure}[htbp]
\begin{center}
\scalebox{0.40}
{
\includegraphics{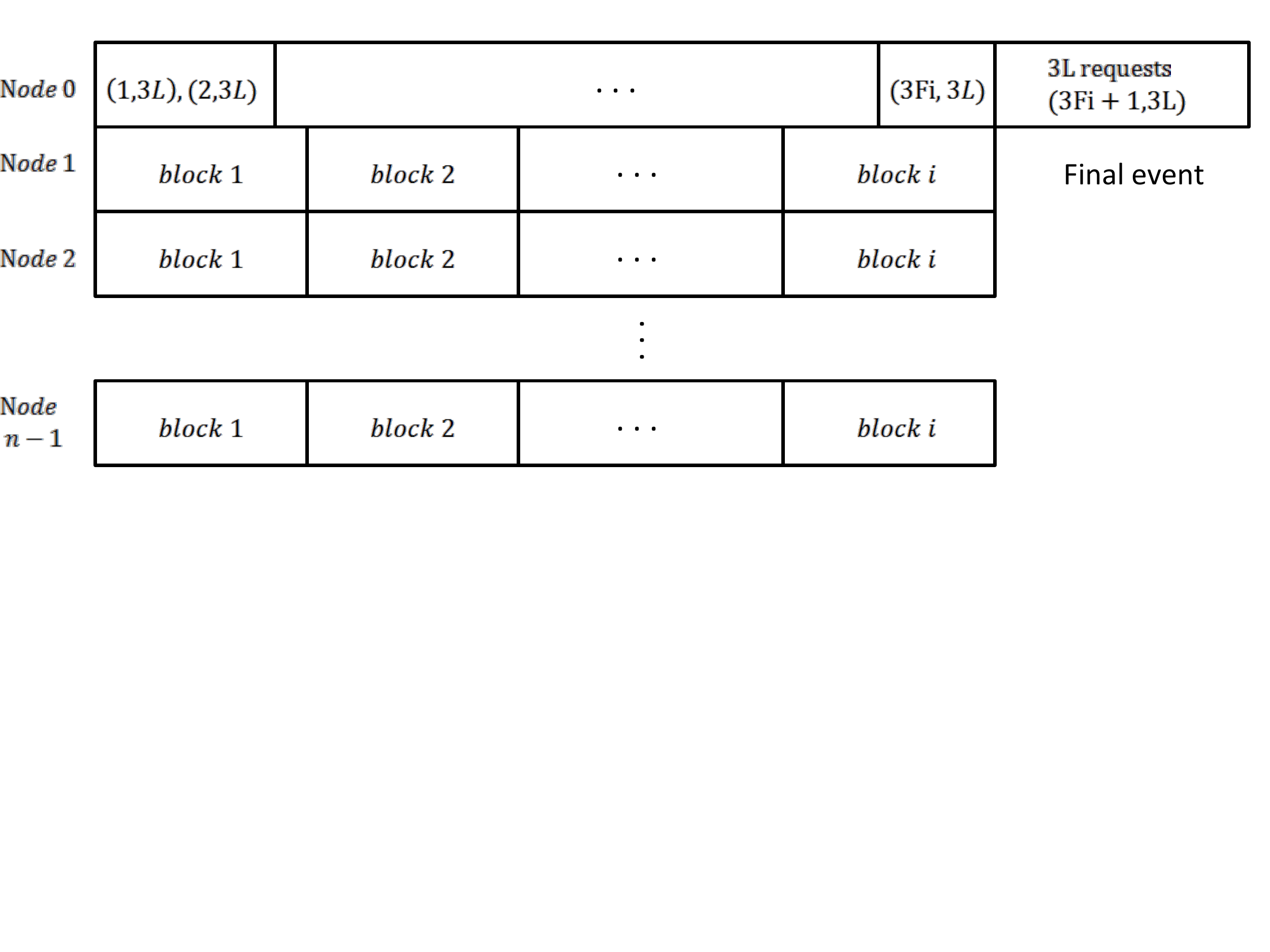}
}
\end{center}
 \vspace{-120pt}
\caption{Sequence structure for Termination Case 2. See Figure \ref{figure:Block}
for blocks structure.}
\label{figure:Case2}
\end{figure}

\begin{figure}[htbp]
\begin{center}
\scalebox{0.40}
{
\includegraphics{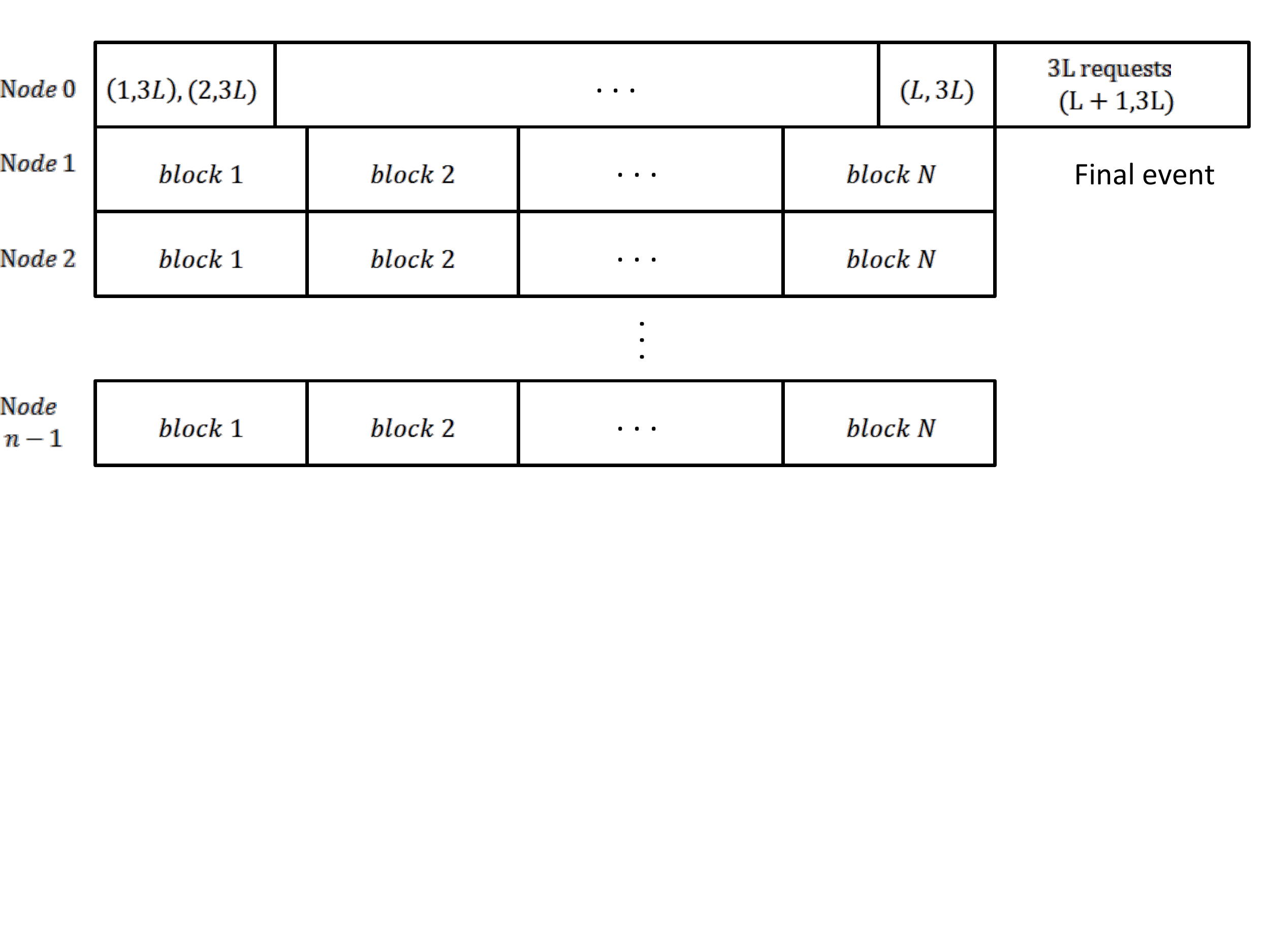}
}
\end{center}
\vspace{-120pt}
\caption{Sequence structure for Termination Case 3. Recall that $N = \frac{1}{3}\sqrt{\frac{L}{w({\rm S})}}$.
See Figure \ref{figure:Block} for blocks structure.}
\label{figure:Case3}
\end{figure}

\end{document}